\definecolor{DarkGreen}{rgb}{0.2,0.6,0.2}
\definecolor{purple}{rgb}{0.6,0.3,0.8}
\def\d{\mathrm{d}}
\def\laweq{\buildrel \mathrm{d} \over =}
\newcommand{\var}{\mathrm{Var}}
\newcommand{\E}{\mathbb{E}}
\newcommand{\R}{\mathbb{R}}
\newcommand{\N}{\mathbb{N}}
\newcommand{\p}{\mathbb{P}}
\newcommand{\X}{\mathcal{X}}
\newcommand{\id}{\mathds{1}}
\renewcommand{\ge}{\geqslant}
\renewcommand{\le}{\leqslant}
\renewcommand{\epsilon}{\varepsilon}
\newcommand{\esssup}{\mathrm{ess\mbox{-}sup}}
\newcommand{\essinf}{\mathrm{ess\mbox{-}inf}}
\renewcommand{\cdots}{\dots}
\theoremstyle{plain}
\newtheorem{theorem}{Theorem}
\newtheorem{corollary}{Corollary}
\newtheorem{lemma}{Lemma}
\newtheorem{proposition}{Proposition}
\theoremstyle{definition}
\newtheorem{definition}{Definition}
\newtheorem{example}{Example}
\theoremstyle{remark}
\newtheorem{remark}{Remark}
\newcommand{\SEimplies}{%
  \mathrel{\rotatebox[origin=c]{315}{$\implies$}}%
}
\newcommand{\SWimplies}{%
  \mathrel{\rotatebox[origin=c]{225}{$\implies$}}%
}
\begin{document}

\title{Diversification Preferences and Risk Attitudes}
\author{Xiangxin He\thanks{Department of Statistics and Actuarial Science, University of Waterloo,  Canada. \Letter~{\scriptsize\url{g34he@uwaterloo.ca}}
} \and Fangda Liu\thanks{Department of Statistics and Actuarial Science, University of Waterloo,  Canada. \Letter~{\scriptsize\url{fangda.liu@uwaterloo.ca}}} \and Ruodu Wang\thanks{Department of Statistics and Actuarial Science, University of Waterloo,  Canada. \Letter~{\scriptsize\url{wang@uwaterloo.ca}}}}
\date{\today}
 
\maketitle

\begin{abstract}
Portfolio diversification is a cornerstone of modern finance, while risk aversion is central to decision theory; both concepts are long-standing and foundational. We investigate their connections by studying how different forms of diversification correspond to   notions of risk aversion. We focus on the classical distinctions between weak and strong risk aversion, and consider diversification preferences for pairs of risks that are identically distributed, comonotonic, antimonotonic, independent, or exchangeable, as well as their intersections. Under a weak continuity condition and without assuming completeness of preferences, diversification for antimonotonic and identically distributed pairs implies weak risk aversion, and diversification for exchangeable pairs is equivalent to strong risk aversion. The implication from diversification for independent pairs to weak risk aversion requires a stronger continuity. We further provide results and examples that clarify the relationships between various diversification preferences and risk attitudes, in particular justifying the one-directional nature of many implications.

\medskip \noindent \textbf{Keywords}: Diversification, dependence, risk aversion, antimonotonicity, incomplete preferences
\end{abstract}


\section{Introduction}

Diversification and risk attitudes are two of the most fundamental ideas in economics and finance. Diversification is central to portfolio selection and risk management since the seminal work of \cite{M52}, while risk aversion is fundamental to models of decision making under risk \citep{A63,P64,RS70}. Both concepts are classical and deeply embedded in practice, and yet their precise relationship is subtle. A unified understanding of how ``wanting to diversify" constrains a decision maker's risk attitude is essential both for theory---to organize the rich landscape of preference models---and for applications, where one would like to infer risk attitudes from observed diversification behavior, or to predict diversification behavior from risk attitudes.
 
\cite{D89} introduced an axiomatic notion of preference for portfolio diversification and showed that diversification is strictly stronger than strong risk aversion of \cite{RS70}, although these two concepts are equivalent under the expected utility (EU) model.  Dekel formulated diversification as a preference for any convex combination of outcomes that are already equally desirable. This approach is conceptually natural, and it is mathematically elegant as it reduces to quasi-convexity of the preferences under mild conditions, highlighted by \cite{CT02} and \cite{CL07}. Nevertheless, requiring 
diversification for all   dependence structures in the portfolio, including those without hedging effects, is  quite demanding. In practice, investors may only actively seek diversification in specific situations---for example, when combining market positions that hedge each other,   when combining insurance and reinsurance contracts, or when pooling uncorrelated assets. Outside these situations, there may be no compelling reason to treat mixing as strictly desirable, and the empirical verification of  Dekel's global notion of diversification needs to consider all types of dependence.   

This observation raises a natural question: how should diversification be formulated when decision makers only exhibit it in certain economically meaningful configurations of the portfolio risks? 
For  pairs of risks, there are four fundamental dependence structures: comonotonicity,  antimonotonicity, exchangeability, and independence; see  \cite{MFE15} for these dependence concepts in risk management.  
Diversification on antimonotonic pairs is intuitive and empirically observable, as it is common in practice for an investor to combine random payoffs that hedge each other, or to purchase an insurance policy on a potential random loss; in both cases, the decision maker prefers the combination of antimonotonic random variables. 
Diversification on independent pairs is also compelling in the context of finance and insurance, as the average of independent payoffs reduces the total payoff's variance, which is desirable as argued by \cite{M52}. 
Diversification on exchangeable pairs
reflects a tendency to combine risks that exhibit symmetry, a structure that is common for similar assets that share a common risk factor.
On the other hand, diversification on comonotonic pairs may not be appealing, as such pairs do not provide hedging or risk reduction intuitively.\footnote{These dependence concepts are also prominent in decision theory.  
Comonotonicity is fundamental to the axiomatization of the risk preferences of \cite{Y87} and the ambiguity model of \cite{S89}, 
independence is used to axiomatize risk preferences by \cite{PST20} and 
\cite{MPST24}, and antimonotonicity has  special features in sharp contrast to comonotonicity, as studied by \cite{ACV21} and \cite{PWW25}. For a pair of identically distributed (ID) risks, exchangeability includes comonotonicity, independence, and antimonotonicity as special cases.}

Our contributions are a  systematic  study of how  diversification preferences on various classes of pairs relate to the classic notions of weak and strong risk aversion; thereby, we formally connect decision theory to dependence modeling, two popular research fields. 
  Our diversification preferences are formulated on (i) all pairs of risks, (ii) ID pairs, (iii) comonotonic pairs; (iv) antimonotonic pairs, (v) exchangeable pairs, (vi) independent pairs, and (vii) intersections such as antimonotonic and ID. We weaken the assumptions of Dekel in several ways: (a) we require diversification only  for economically relevant dependence structures and pairs of risks,
 (b) we do not impose completeness or  monotonicity on the preferences, and (c) our continuity assumption, upper semicontinuity with respect to the $L^\infty$-norm, is very weak.
Each weakening 
makes our results stronger. The generalization in (a) offers new economic insights on the relationship between dependence and risk attitudes, a topic  recently explored by \cite{MMWW25} in the context of insurance. 
The generalizations in (b)--(c) are not just technical, as they allow for more important risk preferences such as the incomplete mean-variance model of \cite{M52} 
and quantile  maximizers \citep{R10}.

Our main results are first formulated on $L^\infty$, the space of bounded random variables. We find that diversification on antimonotonic and ID pairs lies strictly between weak and strong risk aversion (Theorem \ref{th:1}), 
whereas diversification on comonotonic pairs or  independent pairs  is too weak: neither implies  weak risk aversion, and they are indeed compatible with strong risk-seeking models (Propositions \ref{prop:1}--\ref{prop:2}). 
Diversification on exchangeable pairs,
or ID pairs with no restriction on the dependence, is equivalent to strong risk aversion (Theorem \ref{th:1ex}). 
We further show that under a stronger form of  continuity, called compact upper semicontinuity \citep{CM95}, diversification on independent and ID pairs lies strictly between weak and strong risk aversion (Theorem \ref{th:2}). These results highlight 
that the intuitively plausible and empirically observable property of diversification on antimonotonic (or independent) and ID pairs leads to weak risk aversion, 
and extending the property to exchangeable pairs 
gives rise to strong risk aversion.  
Figure \ref{fig:1} summarizes the main obtained implications. 
Furthermore, under mild conditions, neutrality  to any of the diversification classes above is equivalent to risk neutrality (Theorem \ref{th:3}). 
The results are generalized to $L^p$ for $p\ge 1$ through a new result (Theorem \ref{th:lln}) that can be seen as a  law of large numbers   for negatively dependent sequences \citep{L66} on $L^p$,  which may be of independent interest in probability theory.

\begin{figure}[ht]
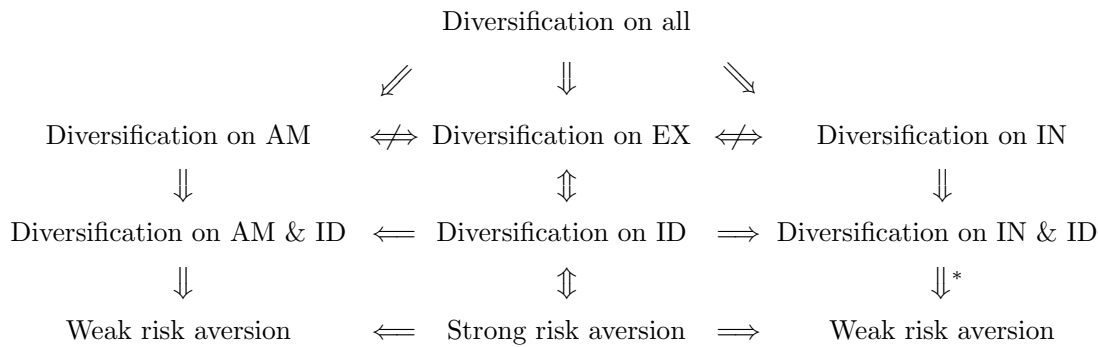


\begin{align*}  \setlength{\arraycolsep}{1pt}
\resizebox{\textwidth}{!}{$
\begin{array}{ccccc} &  & \mbox{Diversification on all} & &  \vspace{0.2cm}
\\ 
\vspace{0.2cm}
    &  \SWimplies  & \big\Downarrow &  \SEimplies & 
\\
\mbox{Diversification on AM } &\centernot\iff&\mbox{Diversification on EX }   &\centernot\iff& \mbox{Diversification on IN}   \vspace{0.2cm}
\\ 
\vspace{0.2cm}    \big\Downarrow   &    &  \big\Updownarrow  & &   \big\Downarrow
\\ 
 \mbox{Diversification on AM \& ID } &\impliedby& \mbox{Diversification on ID } &\implies& \mbox{Diversification on IN \& ID } \vspace{0.2cm}
\\
\vspace{0.2cm}  \big\Downarrow & &  \big\Updownarrow & &   ~ \big{\Downarrow^*}   
\\
 \mbox{Weak risk aversion } & \impliedby & \mbox{Strong risk aversion} &  \implies    & \mbox{Weak risk aversion}
\end{array}
$}
\end{align*}
\caption{Summary of results for risk preferences, 
where  ``AM" stands for ``antimonotonic"  (we omit ``pairs"), ``EX" stands for ``exchangeable",  ``IN" stands for ``independent", $\centernot\iff$ means incomparable, and $\Downarrow^*$ requires  compact upper semicontinuity. The converse statements of all single-direction implications do not hold for general risk preferences.}\label{fig:1}
\end{figure}

The results in the paper require substantial technical innovations. 
The proofs of the main results involve an
iterative averaging and symmetrization scheme based on antimonotonic and independent couplings, using quantile transforms and a representation of \cite{S65}.  For antimonotonic couplings, this iteration yields a sequence of payoffs with the same mean and strictly shrinking range, utilizing a technical lemma of \cite{HWWW24}. The shrinking range is important for us to  use  $L^\infty$-upper semicontinuity. 
Theorems \ref{th:1}--\ref{th:2}  generalize 
 several results in the literature, including  
\cite{D89} and \cite{CL07} on strong risk aversion, 
 \cite{L05} and \cite{FS16} on law-invariant risk measures,  
\cite{PWW25} on antimonotonic convexity. 
For independent pairs, $L^\infty$-continuity is not sufficient because the laws of large numbers do not offer convergence in $L^\infty$. 
The law of large numbers for negatively dependent sequences on $L^p$ 
requires  classic techniques in stochastic order \citep{MS02,SS07} and a recent result on uniform integrability by \cite{LV13}.
We offer many (counter)examples that  carefully design law-invariant and continuous mappings that violate various versions of diversification  while satisfying or failing  risk aversion. These examples illustrate the necessity of our assumptions and the exact scope of each result, which justify the strictness of the single-direction implications in Figure \ref{fig:1}.

\section{Preferences and risk aversion}

Let $(\Omega, \mathcal F, \p)$ be an atomless probability space and $L^\infty $ be the set of essentially bounded random variables on this space. Almost-sure equal random variables are treated as identical.
Random variables in $L^\infty$ are interpreted as random payoffs in one period. Constant random variables  are identified with elements in $\R$. The $L^\infty$-norm of a random variable $X$ is given by
$$\Vert X \Vert_\infty= \inf \{x\in \R:\p( |X|>x)=0\},$$ which is the essential supremum of $|X|.$  
In the main part of the paper, we work with the domain $L^\infty$, which is the standard space in decision theory and risk measures.   The results can be generalized to $L^p$ with $p\in [1,\infty)$, the space of random variables with finite $p$-th moment, which we discuss in Section \ref{sec:L1}. 
  Let $\Delta_n$ be the standard simplex in $\R^n$.
All terms like ``increasing" in this paper are in the weak sense.

We write $X\laweq Y$ when two random variables (or random vectors) $X$  and $Y$ are identically distributed (ID). 
The decision maker's preferences are represented by
a transitive binary relation $\succsim$  on $L^\infty$, called a preference relation, with    strict part $\succ$  and    symmetric part $\simeq$. 
A \emph{risk preference} $\succsim$  is a preference relation satisfying the following two  standard properties. 
\begin{enumerate} 
\item[(a)]  Law invariance: $X \laweq  Y \implies X \simeq Y$ for all $X ,Y\in L^\infty$,  
\item[(b)]  Upper semicontinuity: the set 
$\left\{ Y \in L^\infty: Y \succsim X \right\}$ is closed with respect to $L^\infty$-norm for each $X\in L^\infty$. 
\end{enumerate}

If in (b),  the set $\left\{ Y \in L^\infty: X \succsim Y \right\}$ is also closed, then $\succsim$ is \emph{continuous}.  
 Throughout, continuity is with respect to $L^\infty$-norm when not specified otherwise.   
Virtually all decision models satisfy this form of continuity. 
We do not assume completeness of $\succsim$ (each pair is comparable by $\succsim$) 
or monotonicity ($X\ge Y$ implies $X\succsim Y$). 
 This allows for incomplete and nonmonotone preferences, such as the  mean-variance preferences of \cite{M52}, that is, 
\begin{equation}\label{eq:MV}
X \succsim Y ~\iff~ \E[X]\ge \E[Y] \mbox{ and } \var(X) \le \var(Y).
\end{equation}
In all results, we do not assume any particular decision model for the risk preferences.

In many financial applications, the preference relation 
  $\succsim$ is represented  by 
  a utility functional $\mathcal U$ on $L^\infty$, that is,
\begin{equation}
\label{eq:utility}
X \succsim Y ~\iff~ \mathcal U(X)\ge \mathcal U(Y),
\end{equation}
 or 
  a risk measure $\rho$ on $L^\infty$ (with  a  sign flip), that is, $ 
X \succsim Y  \iff  \rho(-X)\le \rho(-Y).$  
The input of the risk measure is  $-X$, interpreted as the potential loss/gain from the payoff $X$,  following the convention of \cite{MFE15}.
 With \eqref{eq:utility}, property (a)  of $\succsim$ translates into law invariance of $\mathcal U$, i.e., $ X \laweq Y$ implies $\mathcal U(X)=\mathcal U(Y)$, and property  (b) translates into 
 the  upper semicontinuity of $\mathcal U$. These are standard properties and satisfied by common  utility functionals and risk measures.

For some results, we need a stronger notion of continuity, called compact continuity (\citealp{CM95,CL07}).
We say that 
 a sequence $(X_n)_{n\in \N}$ of random variables converges to $X$ in \emph{bounded convergence} if $(X_n)_{n\in \N}$ is uniformly bounded and  $X_n\to X$ almost surely. For law-invariant preference relations, it is safe to replace almost sure convergence here with convergence in distribution. 
\begin{enumerate}  
\item[(c)]  Compact continuity: the sets 
$\left\{ Y \in L^\infty: Y \succsim X \right\}$ 
and 
$\left\{ Y \in L^\infty: X \succsim Y \right\}$ 
are closed with respect to bounded convergence for each $X\in L^\infty$.
\end{enumerate}
Compact upper semicontinuity
is defined analogously. 
Compact (semi)continuity is stronger than $L^\infty$-(semi)continuity. For instance, 
denote by   
$Q_X$ the left quantile function of a random variable $X$, that is, $Q_X(t)=\inf\{x\in \R: \p(X\le x)\ge t\}$ for $t\in (0,1)$. The quantile mapping $X\mapsto Q_X(t)$ for any $t\in(0,1)$ is $L^\infty$-continuous but not compact continuous; another such example is the essential supremum functional $X\mapsto \esssup X$.

Next, we introduce notions of risk aversion. 
First, we need the concave order between two random variables $X, Y \in L^\infty$, written as $X \ge_{\rm cv} Y$, when 
\begin{equation*}
    \E[u(X)] \ge \E[u(Y)] \quad \text{for all concave} \ u:\R \to \R.
\end{equation*} 
For technical treatments on the concave order and its variants, see \cite{SS07}. In risk management, it is common to use the convex order, which is the reverse relation  of the concave order, that is, $X\ge_{\rm cx} Y \iff X\le_{\rm cv}Y$.

The weak and strong notions of risk aversion are defined next. For various notions of risk aversion in popular decision models and their characterization, see \cite{C95} and \cite{SZ08}.

\begin{definition}
A risk preference $\succsim$ exhibits \textit{weak risk aversion} if for $X \in L^\infty$,
\begin{equation*}
    \E[X] \succsim X.
\end{equation*}
A risk preference $\succsim$ exhibits \textit{strong risk aversion} if for $X,Y\in L^\infty$, 
\begin{equation*}
    X \ge_{\rm cv} Y \implies X \succsim Y.
\end{equation*}
Weak and strong notions of \textit{risk seeking} are defined by replacing $\succsim$ with $\precsim$ in the above implications, respectively. 
\textit{Risk neutrality} means $\E[X]\simeq X$ for all $X\in L^\infty$.
\end{definition}

It is straightforward to see that strong risk aversion implies weak risk aversion, and risk neutrality is equivalent to both   (either weak or strong) risk aversion and   risk seeking.
In the expected utility (EU) model, each of weak risk aversion and strong risk aversion is equivalent to a concave utility function. In the dual utility model of \cite{Y87}, weak risk aversion is strictly weaker than strong risk aversion. Incomplete and non-monotone preferences can exhibit risk aversion; for instance, \eqref{eq:MV} exhibits strong risk aversion.


\section{Diversification and dependence}

We first introduce a few notions of dependence that are important in statistical modeling. They will be essential in our formulation of diversification. 
\begin{enumerate}[label=(\alph*)]
\item 
 A pair $(X,Y)$ of random variables is {\it comonotonic}  if
\begin{equation*}
    (X(\omega)-X(\omega'))(Y(\omega)-Y(\omega')) \ge 0 \quad \text{for }
    (\omega, \omega') \in \Omega^2,\mbox{~$\p\times \p$-a.s.}
\end{equation*} 
\item 
A pair $(X,Y)$ is \emph{antimonotonic} (also called \emph{anticomonotonic}, or \emph{counter-monotonic}) if $(X,-Y)$ is comonotonic. 
\item 
A pair
 $(X,Y)$  is {\it exchangeable} 
 if $(X,Y)\laweq (Y,X)$.
 
\end{enumerate}
Comonotonicity describes the strongest form of positive dependence, 
whereas antimonotonicity describes the strongest form of negative dependence.
An exchangeable pair is necessarily ID. For ID pairs, all of comonotonicity,   independence, and antimonotonicity are special cases of exchangeability.  For a general treatment on these dependence concepts and their applications in finance, see \cite{J97} and \cite{DDGKV02}.

We now define diversification in a similar way to \cite{D89}, with the difference that we will restrict the random payoffs at comparison to those satisfying certain conditions specified by a class  $\mathcal X \subseteq (L^\infty)^2$ of   pairs of random variables.
\begin{definition}\label{def:main}
For  $\mathcal X \subseteq (L^\infty)^2$,   a risk preference $ \succsim$ exhibits 
 \emph{diversification on $\X$} if  
\begin{equation}\label{eq:QC}
    X \simeq Y \implies \lambda X + (1-\lambda)Y \succsim  Y \mbox{ for all $\lambda \in [0,1]$},
\end{equation}
 and for all $(X,Y)\in \mathcal X$.
 \end{definition}

 We use natural language to describe the class $\mathcal X$. For instance,  we   say ``diversification on antimonotonic and ID pairs",   meaning that  \eqref{eq:QC} holds for $(X,Y)$ that satisfy both antimonotonicity and ID. 
 When $\X=(L^\infty)^2$, we simply say ``diversification on all pairs".

 \cite{D89} formulated diversification on an arbitrary number of random payoffs, that is, 
 $$
   n\in \N,~ X_1\simeq   \dots\simeq X_n \implies \sum_{i=1}^n \lambda _i X_i \succsim  X_1  \mbox{~for all $(\lambda_1,\dots,\lambda_n )\in \Delta_n$},
 $$
 where $\Delta_n$ is the standard simplex in $\R^n$. 
 Our formulation  \eqref{eq:QC}  only involves pairs of payoffs in a set $\X$, thus a weaker requirement in general; some conditions on more than two payoffs are indirectly imposed through transitivity 
 of $\succsim$. 
A slightly stronger formulation than \eqref{eq:QC}  is 
\begin{align}
    \label{eq:QC2}
    X \succsim Y \implies \lambda X + (1-\lambda)Y \succsim  Y \mbox{ for all $\lambda \in [0,1]$},
\end{align} 
  and under mild conditions the two formulations are equivalent (e.g., \citealp{CT02}). 
The property in \eqref{eq:QC2} for all pairs $(X,Y)$ is called \emph{convexity}, \emph{concavity}, \emph{quasi-convexity}, or \emph{quasi-concavity} of $\succsim$ by different authors.
In the context of risk measures, \eqref{eq:QC2} becomes 
\begin{equation}
    \label{eq:QC-riskmeasure}
\rho(\lambda X + (1-\lambda)Y) \le \max\{\rho(X),\rho(Y)\},~~~X,Y\in L^\infty, ~\lambda\in [0,1],
\end{equation}
which is called the quasi-convexity of $\rho$, and is well studied by \cite{CMMM11} and \cite{DK13}.\footnote{A \emph{monetary} risk measure (\citealp{FS16}) is a mapping $\rho:L^\infty\to\R$ that satisfies  \emph{monotonicity}: $\rho(X)\ge \rho(Y)$ if $X\ge Y $, 
and \emph{cash additivity}: $ \rho(X+c)=\rho(X)+c$ for $c\in \R$ and $X\in L^\infty$. 
For monetary risk measures, quasi-convexity is  equivalent to the usual convexity. All law-invariant convex and monetary risk measures, as well as their maximum, minimum, and convex combinations,
 exhibit strong risk aversion \citep[Proposition 3.2]{MW20}.} 
 The properties  \eqref{eq:QC2}--\eqref{eq:QC-riskmeasure} are also  essential to the formulation of  aspirational preferences and the associated risk measures; see \cite{BGS2012}.

 

\section{Relations between diversification and risk aversion}
\label{sec:main}

Diversification is closely related to risk aversion, as already observed by \cite{D89}. In this section we explore  how imposing specific dependence structures  in diversification  affects risk aversion.

\subsection{Comonotonic pairs}
Our first   observation is that diversification for comonotonic pairs does not lead to any notion of risk aversion. Intuitively,  $X$ and $Y$ in a comonotonic pair do not  hedge each other in the portfolio $\lambda X+(1-\lambda) Y$.  
If $(X,Y)$ is comonotonic, then $$Q_{\lambda X+(1-\lambda)Y}=\lambda Q_X+ (1-\lambda) Q_Y.$$
Therefore, the left quantile is affine on comonotonic pairs, although quantiles do not exhibit risk aversion or risk seeking in general; see \cite{MFE15} for more discussions on comonotonicity and using   quantiles as   risk measures in finance.
Hence, diversification on comonotonic pairs is not directly related to hedging considerations and it does not
force  the decision maker to be risk averse. The following proposition makes this simple point clear. It further illustrates that a risk preference  can exhibit  both  diversification on comonotonic pairs and \emph{strict strong risk seeking}, that is, 
\begin{equation}\label{eq:ssrs}
\mbox{ for all $X,Y$ with  $X\not\laweq Y$}, ~
  X \ge_{\rm cv} Y  \implies Y \succ  X.
\end{equation}

\begin{proposition}\label{prop:1}
For a risk preference, diversification on comonotonic pairs 
does not imply weak risk aversion. Indeed, the risk preference $\succsim$ represented by $U$ via  \eqref{eq:utility} with
$$
\mathcal U(X) = \int_0^1 g(t)  Q_X(t) \d t, ~~X\in L^\infty, \mbox {~~for any strictly increasing function $g$,}
$$
exhibits diversification on comonotonic pairs  and strict strong risk seeking in \eqref{eq:ssrs}.
\end{proposition}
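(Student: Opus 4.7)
The plan is to verify three things about the functional $\mathcal U$: that it induces a valid risk preference on $L^\infty$, that this preference satisfies diversification on comonotonic pairs, and that it satisfies strict strong risk seeking (the latter will in fact require $g$ to be strictly increasing; a constant $g$ gives risk neutrality, which is still a strong risk seeking example but not strict). Law invariance is immediate since $\mathcal U(X)$ depends on $X$ only through the left quantile $Q_X$. For $L^\infty$-upper semicontinuity (indeed continuity), I would use the elementary fact that $\|X_n - X\|_\infty \to 0$ implies $\|Q_{X_n} - Q_X\|_\infty \to 0$; combined with integrability of $g$ on $[0,1]$ and uniform boundedness of the $Q_{X_n}$, dominated convergence delivers $\mathcal U(X_n) \to \mathcal U(X)$.

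For diversification on comonotonic pairs, the crucial identity (already noted in the text preceding the proposition) is $Q_{\lambda X + (1-\lambda)Y} = \lambda Q_X + (1-\lambda) Q_Y$ when $(X,Y)$ is comonotonic. This makes $\mathcal U$ affine along such convex combinations, so $X \simeq Y$, equivalently $\mathcal U(X)=\mathcal U(Y)$, immediately forces $\mathcal U(\lambda X + (1-\lambda)Y) = \mathcal U(Y)$, whence $\lambda X + (1-\lambda)Y \simeq Y$ and in particular $\lambda X + (1-\lambda)Y \succsim Y$.

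For strong risk seeking I will rely on the quantile characterization of the concave/convex order: $X \ge_{\rm cv} Y$ (equivalently $X \le_{\rm cx} Y$) holds iff $\E[X] = \E[Y]$ and $H(t) := \int_0^t (Q_X(s) - Q_Y(s))\,ds \ge 0$ for all $t \in [0,1]$, with $H(0) = H(1) = 0$. Integration by parts in Lebesgue--Stieltjes form, valid because $H$ is Lipschitz with vanishing boundary values and $g$ is of locally bounded variation on $(0,1)$, gives
\begin{equation*}
\mathcal U(Y) - \mathcal U(X) = \int_0^1 g(t)\bigl(Q_Y(t) - Q_X(t)\bigr)\,dt = \int_{[0,1]} H(t)\,dg(t) \ge 0,
\end{equation*}
since $H \ge 0$ and $dg$ is a nonnegative measure. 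This establishes $Y \succsim X$, i.e., strong risk seeking. For the strict version when $X \not\laweq Y$, the function $H$ is continuous and not identically zero (because $Q_X \ne Q_Y$ on a set of positive Lebesgue measure), so $H > 0$ on some open subinterval of $(0,1)$; strict monotonicity of $g$ then charges every such interval with positive $dg$-mass, forcing $\int_{[0,1]} H\,dg > 0$ and hence $Y \succ X$.

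The main obstacle is really the strict inequality: it hinges on $g$ being strictly increasing, so that its Lebesgue--Stieltjes measure assigns positive mass to every open subinterval of $(0,1)$. The boundary terms in the integration by parts vanish cleanly thanks to $H(0) = H(1) = 0$, which absorbs any potential blow-up of $g$ near the endpoints, so the argument is safe once $g$ is integrable on $[0,1]$. Everything else is essentially bookkeeping; the conceptual content is the quantile representation of the convex order coupled with the comonotonic-additivity of $Q$.
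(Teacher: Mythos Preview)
Your argument is correct and follows the same logic the paper relies on: comonotonic additivity of quantiles gives affinity of $\mathcal U$ on comonotonic pairs, hence diversification; for strong risk seeking you supply the standard integration-by-parts proof via the quantile characterization of the convex order, whereas the paper simply cites this fact from \cite{LLW25}. Your observation that the \emph{strict} conclusion in \eqref{eq:ssrs} requires $g$ to be strictly increasing (constant $g$ yields risk neutrality) is accurate and is implicit in the paper's appeal to a ``strictly concave weighting function''; likewise, integrability of $g$ on $[0,1]$ is tacitly assumed for $\mathcal U$ to be well defined.
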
 
  \begin{proof} 
It suffices to show the second statement. 
 Note that $\succsim$ belongs to the dual utility of \cite{Y87} with a strictly concave weighting function. As a common property of the dual utility functional,  $\mathcal U$ is affine on comonotonic pairs, and hence diversification on comonotonic pairs holds. We can check that it also  satisfies \eqref{eq:ssrs};   a precise statement of this fact can be found in  \citet[Corollary 1]{LLW25}.
  \end{proof}

\cite{CT02} showed that in the EU model,  diversification on comonotonic pairs is equivalent  to  both diversification on all pairs and  strong risk aversion. Combined with Proposition \ref{prop:1}, this highlights the coarse nature of the EU model in its treatment of  diversification. 

\subsection{Antimonotonic pairs}
\label{sec:anti}
In contrast to the negative result in 
Proposition \ref{prop:1}, we present a positive result that diversification on antimonotonic pairs, which is intuitively plausible,
 has a 
 normatively appealing consequence, that is, weak risk aversion.

\begin{theorem}\label{th:1}
For a risk preference, diversification  on antimonotonic and ID pairs implies  weak  risk aversion,
and it is implied by strong risk aversion.
Both implications are in general strict.
\end{theorem}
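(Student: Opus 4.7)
The plan treats the two implications separately. The implication from strong risk aversion to diversification on antimonotonic and ID pairs is routine and does not actually use antimonotonicity: for any ID pair $(X,Y)$ and any $\lambda\in[0,1]$, Jensen's inequality applied to an arbitrary concave $u$ under the expectation gives $\lambda X+(1-\lambda)Y\ge_{\rm cv} Y$, so strong risk aversion immediately yields $\lambda X+(1-\lambda)Y\succsim Y$.

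The nontrivial direction---diversification on antimonotonic and ID pairs implies weak risk aversion---proceeds by an iterative symmetrization scheme. Fix $X\in L^\infty$ with $m=\E[X]$, and set $X_0=X$. At step $n$, I exploit atomlessness to construct $\widetilde X_n\laweq X_n$ with $(X_n,\widetilde X_n)$ antimonotonic: take a uniform $U_n$ with $X_n=Q_{X_n}(U_n)$ and set $\widetilde X_n=Q_{X_n}(1-U_n)$, which is antimonotonic to $X_n$ because $Q_{X_n}$ is increasing. Define $X_{n+1}=\tfrac{1}{2}(X_n+\widetilde X_n)$. Law invariance gives $\widetilde X_n\simeq X_n$, and diversification on antimonotonic and ID pairs (applied with $\lambda=1/2$) yields $X_{n+1}\succsim \widetilde X_n$, hence $X_{n+1}\succsim X_n$ by transitivity. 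Chaining, $X_n\succsim X$ for every $n\in\N$.

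It then suffices to show $X_n\to m$ in $L^\infty$. Mean preservation $\E[X_n]=m$ is immediate from the ID property at each step. For the range, $X_{n+1}$ has the law of $\tfrac{1}{2}(Q_{X_n}(U)+Q_{X_n}(1-U))$ for uniform $U$, whose essential range is strictly contained in that of $X_n$ whenever $X_n$ is non-degenerate. The technical lemma of \cite{HWWW24} flagged in the introduction provides the quantitative version I need, forcing the essential range of $X_n$ to shrink to zero. Combined with the fixed mean, this gives $\Vert X_n-m\Vert_\infty\to 0$. Applying $L^\infty$-upper semicontinuity to the sequence $X_n\succsim X$ with limit $m$ then delivers $m\succsim X$, which is weak risk aversion.

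Strictness of both implications will be witnessed by examples developed later in the paper: a weakly risk-averse but non-diversifying (on antimonotonic ID pairs) risk preference on the one hand, and a diversifying but not strongly risk-averse preference---e.g., a Yaari-type dual utility with an appropriately chosen weighting function---on the other. The principal obstacle in the main argument is the sup-norm convergence $X_n\to m$: variance or $L^1$ convergence would follow from a straightforward martingale-type computation, but $L^\infty$ convergence (required by the paper's weak continuity assumption) hinges on the quantitative range-shrinking step, which is where \cite{HWWW24} does the work.
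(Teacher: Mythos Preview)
Your proposal is correct and follows essentially the same approach as the paper: the iterative antimonotonic symmetrization $X_{n+1}=\tfrac12(X_n+\widetilde X_n)$, the range-halving estimate from \cite{HWWW24} to force $\Vert X_n-\E[X]\Vert_\infty\to 0$, and the closing step via $L^\infty$-upper semicontinuity all match the paper's argument, as does the Jensen-inequality proof of the easy direction. The strictness witnesses you anticipate are precisely those the paper uses (Example~\ref{ex:weak-strange} and the Yaari-type construction discussed in Remark~\ref{rem:anti}).
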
 
  \begin{proof}   
We first show the implication from diversification on antimonotonic and ID  pairs to weak risk aversion. 
Let $X \in L^\infty$ and  $U$ be uniformly distributed  on $[0,1]$. Define
\begin{equation*}
    X_0^{(1)} = Q_X(U), \quad X_0^{(2)} = Q_X(1-U), 
\quad  \mbox{and} \quad 
    X_1 = \frac{X_0^{(1)} + X_0^{(2)}}{2}.
\end{equation*}
Clearly,   $ X \laweq X_0^{(1)} \laweq X_0^{(2)}$. 
Further, by construction, $X_0^{(1)}$ and $X_0^{(2)}$ are anti-comonotonic. By diversification on antimonotonic pairs and law invariance of $\succsim$,  we have 
\begin{equation*}
    X_1  =  \frac{1}{2}X_0^{(1)} + \frac{1}{2}X_0^{(2)} \succsim 
    X_0^{(1)} \simeq X,
\end{equation*}
and
\begin{equation*}
    \E[X_1] = \frac{1}{2}\E \left[X_0^{(1)} \right] + \frac{1}{2}\E \left[ X_0^{(2)} \right] = \E[X].
\end{equation*}
Inductively, for $n \in \N$, we can construct  
\begin{equation*}
    X_{n}^{(1)} = Q_{X_{n}} (U),  \quad X_{n}^{(2)} = Q_{X_{n}} (1-U), \quad \text{and} \quad
    X_{n+1} = \frac{X_{n}^{(1)} +  X_{n}^{(2)}}{2} . 
\end{equation*}
Following the same arguments, we have 
\begin{equation*}
    X_n \succsim X_{n-1} \succsim \cdots \succsim X_1 
    \succsim X \ \ 
    \text{and} \ \ \E[X_n] = \E[X].
\end{equation*}
For $ n \in \N$, let $R_n = \esssup X_n - \essinf X_n$, where for any random variable $Z$, $\esssup Z$ is its essential supremum and $\essinf Z$ is its essential infinimum.  Clearly, $$\essinf X_n \le \E[X_n] = \E[X] \le \esssup X_n.$$ 
Hence,  
 \begin{equation*}
    |X_n  - \E[X]| \le \esssup X_n - \essinf X_n = R_n  \quad \mbox{$\p$-a.s.}
 \end{equation*}
and thus $
    \| X_n - \E[X]\|_\infty \le R_n.
 $
  Lemma 3.1 of \cite{HWWW24} gives \begin{equation}
  \label{eq:Rdecay}
      R_{n+1}\le \frac{R_{n}}2~~~\mbox{ for $n\ge0$.}
  \end{equation}
We here give a short self-contained proof of \eqref{eq:Rdecay}. 
Let $m=Q_{X_n}(1/2)$. 
  When $U \le 1/2$, it is  $Q_{X_n}(U) \le  m \le Q_{X_n}(1-U)$.
  When  $U > 1/2$,  it is  $Q_{X_n}(U) \ge  m \ge Q_{X_n}(1-U)$.
  In both cases, we have  
 $$
    \frac{\essinf X_n + m}{2} \le \frac{Q_{X_n} (U) + Q_{X_n} (1-U)}{2} \le \frac{m + \esssup X_n}{2}.
 $$
Therefore, $X_{n+1}$ is between $(\essinf X_n + m)/2$ and $(\esssup X_n + m)/2$, thus showing \eqref{eq:Rdecay}. As a consequence of this,
 $
    R_n \le  {R_0}/{2^n} \to 0$ as $n \to \infty$. 
Therefore,
\begin{equation*}
    0 \le \lim_{n \to \infty} \| X_n - \E[X]\|_\infty \le \lim_{n \to \infty} R_n =0.
\end{equation*} By the upper semicontinuity of $\succsim$, we conclude
\begin{equation*}
    X_n \succsim X \ \ \text{for all } \ n \in \N ~\implies~ \E[X] \succsim X.
\end{equation*} 

To show strong risk aversion implies diversification on   ID pairs (antimonotonic or not), it suffices to note that for any $X\laweq Y$ and $\lambda\in [0,1]$, we have $\lambda X+ (1-\lambda) Y\ge _{\rm cv} X$, following from Jensen's inequality. 
The strictness of both implications
is justified by  Example \ref{ex:weak-strange} and Remark \ref{rem:anti}.
  \end{proof}



\begin{example}[Weak risk aversion $\centernot\implies$ diversification on AM and ID]\label{ex:weak-strange}
  Let the risk preference $\succsim$ be given by    $$
X \succsim Y ~\iff~\mathcal U(X) \ge \mathcal U(Y) ,
 $$ 
 where $\mathcal U(Z)=  \E[Z] -\var(Z)|2-\var(Z)|$ for $Z\in L^\infty$. 
 It is clear that  $\succsim$  exhibits weak risk aversion because $\mathcal U(X)\le \E[X]= \mathcal U(\E[X])$. 
 Let $A,B,C$ form a partition of $\Omega$ with equal probability,  $X=3\id_{A}$, and $Y=3\id_B$. Clearly, $(X,Y)$ is an antimonotonic and ID pair. Let $Z=(X+Y)/2$.  
 We can compute $\E[X]= 1$,  $\var(X)=2$, $\E[Z]=1$, and $\var(Z)= 1/2$.
 Therefore, 
 $
\mathcal U(X)
 =1  >1/4=  \mathcal U(Z), 
 $
 violating diversification on antimonotonic and ID pairs.
\end{example}

\begin{remark}[Diversification on AM $\centernot\implies$ strong risk aversion]
\label{rem:anti}
\cite{ACV21} showed that, for preferences represented by Choquet integrals, quasi-convexity on antimonotonic pairs is strictly weaker than convexity. Applying this to the dual utility model of \cite{Y87}, we get that diversification for antimonotonic pairs does not imply strong risk aversion. 
 \end{remark}



\begin{example}[Strong risk aversion $\centernot\implies$ diversification on AM]
\label{ex:mean-variance}
    Let the risk preference $\succsim$   be given by    $$
X \succsim Y ~\iff~  \E[ X] - (\var(X))^{1/4} \ge \E[Y]-  (\var(Y) )^{1/4}.
 $$ 
 It is straightforward to check that  $\succsim$  exhibits strong risk aversion. Let $X=1$, $Y$ take values  $1$ and $3$ with equal probability, and $Z=(X+Y)/2$. Note that $(X,Y)$ is antimonotonic since $X$ is a constant. We have $\E[Y]=2$,    $\var(Y)=1$, $\E[Z]=3/2$, and $\var(Z)=1/4$. 
 Hence, $X\simeq Y$ and 
 $\E[Z]-(\var(Z))^{1/4}=3/2-(1/2)^{1/2} <1=\E[X]-(\var(X))^{1/4}$, showing that $Z\prec X$, violating diversification on antimonotonic pairs.
 Nevertheless, diversification on antimonotonic and ID pairs holds by Theorem \ref{th:1}.
\end{example}

\begin{remark}
    \label{rem:CL07} 
     
In order to get an equivalent characterization of weak risk aversion, one needs to exclusively restrict attention to comparisons between a constant and a random variable.  \citet[Theorem 3.1]{CL07} show that weak risk aversion is equivalent to
       \begin{equation}
    \label{eq:sure1} n\in \N,~ X_1\simeq   \dots\simeq X_n,~(\lambda_1,\dots,\lambda_n)\in \Delta_n,~ \sum_{i=1}^n \lambda _i X_i \in \R  \implies \sum_{i=1}^n \lambda _i X_i \succsim  X_1,
    \end{equation} 
    under additional conditions: completeness, monotonicity, and compact continuity. \citet[Theorem 1]{MMWW25} characterized weak risk aversion via
    \begin{equation}
    \label{eq:MMWW}
Y\laweq Z \mbox{ and $X+Y\in \R$} \implies X+Y\succsim X+Z,
\end{equation}
with no additional assumptions on $\succsim$ other than law invariance and transitivity. 
None of \eqref{eq:sure1} (even restricted to $n=2$) and \eqref{eq:MMWW} is compatible with Definition \ref{def:main}.
\end{remark}

\subsection{Exchangeable pairs}

We next focus on diversification on exchangeable pairs, which turns out to be equivalent to diversification on ID pairs.  
\begin{theorem}\label{th:1ex}
For a risk preference, the following are equivalent: 
\begin{enumerate}[label=(\roman*)]  \item
    strong risk aversion;
    \item diversification  on ID pairs;
    \item  diversification  on exchangeable pairs. 
\end{enumerate} 
\end{theorem}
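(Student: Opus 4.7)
The plan is to verify the cycle (i) $\Rightarrow$ (ii) $\Rightarrow$ (iii) $\Rightarrow$ (i). The first implication follows from Jensen's inequality: for $X\laweq Y$ and $\lambda\in[0,1]$, every concave $u$ satisfies $\E[u(\lambda X+(1-\lambda)Y)]\ge\lambda\E[u(X)]+(1-\lambda)\E[u(Y)]=\E[u(X)]$, so $\lambda X+(1-\lambda)Y\ge_{\rm cv}X$, and strong risk aversion combined with $X\simeq Y$ yields $\lambda X+(1-\lambda)Y\succsim Y$. The implication (ii) $\Rightarrow$ (iii) is immediate, since every exchangeable pair is ID and the defining condition of diversification on $\mathcal X$ weakens as $\mathcal X$ shrinks.

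For the substantive direction (iii) $\Rightarrow$ (i), fix $X,Y\in L^\infty$ with $X\ge_{\rm cv}Y$. The martingale-coupling representation of \cite{S65}, together with the atomlessness of $(\Omega,\mF,\p)$, produces $X',Y'\in L^\infty$ on our space with $X'\laweq X$, $Y'\laweq Y$, and $\E[Y'|X']=X'$; by law invariance it suffices to prove $X'\succsim Y'$, so I relabel and assume $\E[Y|X]=X$ from the outset. I then construct iteratively a sequence $(W_n)_{n\ge 0}$ with $W_0=Y$, $\E[W_n|X]=X$, $W_n\succsim Y$, and $\|W_n-X\|_\infty\to 0$, and close by $L^\infty$-upper semicontinuity of $\succsim$.

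For the iteration step, given $W_n$, choose $U$ uniform on $[0,1]$ independent of $(X,W_n)$ (available by atomlessness) and set $V^{(1)}=Q_{W_n\mid X}(U)$, $V^{(2)}=Q_{W_n\mid X}(1-U)$, and $W_{n+1}=(V^{(1)}+V^{(2)})/2$. Conditional on $X$, the pair $(V^{(1)},V^{(2)})$ is the antimonotonic coupling of $W_n\mid X$ with itself, and since $U\laweq 1-U$ this conditional law is invariant under the swap of coordinates; hence $(V^{(1)},V^{(2)})$ is \emph{unconditionally} exchangeable, with common marginal that of $W_n$. Diversification on exchangeable pairs, together with $V^{(1)}\simeq V^{(2)}\simeq W_n$ by law invariance, then delivers $W_{n+1}\succsim W_n\succsim\cdots\succsim Y$. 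For convergence, I apply Lemma 3.1 of \cite{HWWW24} conditionally on $X=x$ (exactly as in the proof of Theorem \ref{th:1}) to the conditional range $R_n(x)=\esssup(W_n\mid X=x)-\essinf(W_n\mid X=x)$, obtaining $R_{n+1}(x)\le R_n(x)/2$ and hence $R_n(x)\le(\esssup Y-\essinf Y)/2^n$ uniformly in $x$. Since $\E[W_n|X]=X$ lies inside the conditional essential range, $|W_n-X|\le R_n(X)$ a.s., so $\|W_n-X\|_\infty\to 0$, and upper semicontinuity finishes the argument.

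The main obstacle is producing a single coupling that is simultaneously (a) unconditionally exchangeable, so that the weak assumption (iii) can be invoked, and (b) antimonotonic conditional on $X$, so that the range-shrinking lemma of \cite{HWWW24} applies pointwise in the conditioning variable with a decay that is \emph{uniform} in $x$. The conditional-quantile construction driven by a single independent symmetric auxiliary $U$ delivers both features, and the resulting uniform geometric decay of $R_n(x)$ is precisely what upgrades the pointwise shrinkage into $L^\infty$-convergence, matching the weak semicontinuity hypothesis on $\succsim$.
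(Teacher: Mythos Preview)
Your argument is correct and follows the paper's scheme---Strassen's coupling to reduce to $\E[Y\mid X]=X$, conditional antimonotonic pairs built from a symmetric uniform, the range-halving estimate applied conditionally on $X$, and closure via $L^\infty$-upper semicontinuity. The one substantive difference is that the paper first treats finitely valued $X$ and then passes to general $X$ by an $L^\infty$-approximation (Lemma~\ref{lem:1}), whereas you handle general $X$ in one shot by noting that $R_n(x)\le R_0(x)/2^n\le(\esssup Y-\essinf Y)/2^n$ uniformly in $x$; this is a legitimate streamlining, since the geometric decay carries a deterministic constant and the conditional-quantile construction is well defined once a regular conditional distribution is fixed. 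One minor imprecision: the auxiliary uniform $U$ independent of $(X,W_n)$ is not guaranteed by atomlessness alone (take $\sigma(X)=\mF$); as the paper makes explicit, law invariance lets you pass to a richer space supporting an i.i.d.\ uniform sequence independent of $X$, so this does not affect correctness.
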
 
  \begin{proof} 
\underline{(i)$\Rightarrow$(ii)}:  
  Strong risk aversion implies diversification on ID pairs, as we see in the proof of Theorem \ref{th:1}. 
\underline{(ii)$\Rightarrow$(iii)}: This follows by definition.  
We will prove the most involved direction, \underline{(iii)$\Rightarrow$(i)}, below. 

Take $X,Y\in L^\infty$ with $X \ge_{\rm cv} Y$. By Strassen's Theorem (\citealp{S65}), there exists $(X',Y')$  such that $X' \laweq  X$, $Y' \laweq  Y$, and $\E[Y'  \mid X']=X'$. By law invariance of $\succsim$, it suffices to show $X' \succsim Y'$. Therefore, it is without loss of generality to assume $\E[Y|X]=X$. Further, since the risk preference is law invariant,  it does not lose generality to assume that there exists a sequence $(U_n)_{n\in \N}$ of independent and ID uniformly distributed random variables on $[0,1]$ independent of $X$.

We first analyze the case when $X$ takes   values  in a finite set $\mathcal S$.
Let $Z_0=Y-X$. Inductively for $n\ge 0$, we define the following quantities. 
Define the function $$Q_n(s,t)=\inf\{z\in \R: \p(Z_n\le z\mid X=s)\ge t\},~~~t\in (0,1),~s\in \mathcal S,$$
which is the conditional quantile of $Z_n$ given $X=s$.  
Let
\begin{equation*}
    Z_n^{(1)}  = Q_n(X , U_n ), \quad 
    Z_n^{(2)}  = Q_n(X, 1-U_n ), \quad \mbox{and} \quad
    Z_{n+1} = \frac{Z_n^{(1)} + Z_n^{(2)}}{2}.
\end{equation*}
Further, set  $Y_n^{(i)}=X+Z_n^{(i)}$ for $i\in \{1,2\}$ 
and $Y_n=X+Z_n$.
It is clear that for $n \in \N$, $Y_{n+1}= (Y_n^{(1)}+Y_n^{(2)})/2$.

By independence between $U_n $ and $X$, we have that  $Z_n^{(1)}$, $Z_n^{(2)}$, and $Z_n$ have the same conditional distribution  on  $X=s$ for each $s\in \mathcal S$, because they have the same conditional quantile function. 
This implies 
 $
  Y_n^{(1)} \laweq  Y_n^{(2)} \laweq  Y_n,
 $
 and moreover, $  (Y_n^{(1)},  Y_n^{(2)})$ is exchangeable. 
Therefore, 
\begin{equation*}
  \E[Z_{n+1}\mid X]
  = \frac{1}{2} \E[Z_n^{(1)}\mid X]+\frac{1}{2} \E[Z_n^{(2)}\mid X]
  = \E[Z_n\mid X].
\end{equation*}
By induction from $\E[Z_0\mid X]=0$ we get $\E[Z_n\mid X]=0$ for all $n$. 

Note that $\Vert Y_n-X\Vert_\infty=\Vert Z_n\Vert_\infty$. Using the same argument as in part (i) for \eqref{eq:Rdecay}, we get that the length of the range of $Z_n$ conditionally on $X=s$ for each $s\in \mathcal S$ shrinks to $0$. Since $\mathcal S$ is a finite set, this implies $\Vert Z_n\Vert_\infty\to 0$ as $n\to\infty$.  
Because    $(Y_n^{(1)},  Y_n^{(2)})$ is exchangeable  and $Y_{n+1}=({Y_n^{(1)}+Y_n^{(2)}})/{2}$, diversification on exchangeable pairs implies
 $
  Y_{n+1} \succsim  Y_n $ for all $n\in\N.
 $
By the upper semicontinuity of $\succsim$ and $\|Y_n-X\|_\infty = \Vert Z_n\Vert_\infty \to 0$ as $n\to\infty$, we obtain
 $
  X  \succsim Y_0 = X+ Z_0 = Y,
 $ showing strong risk aversion. 

For general $X$ that may take infinitely many values, we rely on the following simple lemma. 
\begin{lemma}\label{lem:1}
For $X \in L^\infty$, there exists a sequence of finitely valued random variables $(X_n)_{n\in \N}$ such that
 $$
  \|X_n - X\|_\infty \to 0 
  \quad\text{and}\quad 
  X_n \ge_{\mathrm{cv}} X \ \text{for all } n\in \N.
 $$
\end{lemma}

 \begin{proof}[Proof of the lemma.] 
For each $n\in\mathbb{N}$, let $\mathcal{G}_n$ be the finite $\sigma$-algebra generated by $\{X\in I_n^k\}_{k=1,\dots,g_n}$, where $(I_n^1,\dots,I_n^{g_n})$ is a finite partition of the support of $X$ into intervals of length at most $2^{-n}$, and define
 $$
  X_n = \mathbb{E}[X \mid \mathcal{G}_n].
 $$
Then $X_n$ is finitely valued and $\|X_n - X\|_\infty \to 0$.  Moreover,  $X_n \ge_{\rm cv} X$ for all $n\in \N$ by the conditional Jensen's inequality.
  \end{proof}

Now we continue to prove Theorem \ref{th:1ex}. Let the sequence $(X_n)_{n\in\N}$ be as in Lemma \ref{lem:1}. Transitivity of the concave order gives $X_n\ge_{\rm cv} X\ge_{\rm cv} Y.$
Using the obtained result on finitely-valued random variables, we conclude $X_n\succsim Y$ for each $n$. Applying the upper semicontinuity of $\succsim$ to the above relation with $  \|X_n - X\|_\infty \to 0$, we get $X\succsim Y$, thus showing the desired statement of strong risk aversion. 
  \end{proof} 

The most important direction in Theorem \ref{th:1ex} 
is (iii)$\Rightarrow$(i), and it 
generalizes 
several results in the literature. 
\citet[Theorem 4.2]{CL07} obtained that, under completeness, strict monotonicity, and compact continuity (essential to their proof), 
diversification on ID pairs is equivalent to strong risk aversion.
Our result relaxes ID pairs to  exchangeable pairs, removes completeness and monotonicity, and uses   $L^\infty$-upper semicontinuity that is weaker than   compact continuity. 
In the risk measure literature, $L^\infty$-continuity is common and satisfied by all monetary risk measures.
Theorem \ref{th:1ex} thus generalizes a classic result in the risk measure literature: A law-invariant convex and monetary risk measure on $L^\infty$ with the Fatou property exhibits strong risk aversion  \citep[Corollary 4.65]{FS16}.\footnote{The result  was shown for coherent risk measures by \cite{L05}. The Fatou property can be omitted, which is first shown  by \cite{JST06} and then strengthened by \citet[Theorem 30]{D12}.}  
Since convex risk measures satisfy \eqref{eq:QC-riskmeasure},  the above result is a special case of Theorem \ref{th:1ex}.     We present  a  corollary here, stronger than the existing results on risk measures, using the convex order.
\begin{corollary}
\label{coro:RM-convex}
    A law-invariant mapping $\rho:L^\infty\to\R$ satisfying  lower semicontinuity and
\begin{equation}\label{eq:rho-convex}
\rho(\lambda X+(1-\lambda)Y) \le \rho(X) \mbox{ for all $X,Y\in L^\infty$ with $(X,Y)\laweq (Y,X)$ and $\lambda \in [0,1]$}
\end{equation}
    is increasing in the convex order.  
\end{corollary}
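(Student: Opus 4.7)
The plan is to reduce the corollary to Theorem \ref{th:1ex} by packaging $\rho$ as a risk preference and verifying the hypothesis of diversification on exchangeable pairs. Define $\succsim$ on $L^\infty$ by
\[
X \succsim Y \iff \rho(X) \le \rho(Y),
\]
which is transitive. Law invariance of $\succsim$ is immediate from law invariance of $\rho$, and for each $X$ the upper-contour set
\[
\{Y \in L^\infty : Y \succsim X\} = \{Y \in L^\infty : \rho(Y) \le \rho(X)\}
\]
is a sublevel set of $\rho$, hence closed in $L^\infty$-norm by lower semicontinuity of $\rho$. Therefore $\succsim$ is a risk preference in the sense of Section 2.

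Next I would verify that $\succsim$ exhibits diversification on exchangeable pairs. Let $(X,Y)$ be exchangeable. Then $X \laweq Y$, so by law invariance $\rho(X)=\rho(Y)$, i.e.\ $X \simeq Y$. The assumed condition \eqref{eq:rho-convex} yields
\[
\rho(\lambda X + (1-\lambda)Y) \le \rho(X) = \rho(Y) \mbox{ for every } \lambda \in [0,1],
\]
which reads $\lambda X + (1-\lambda)Y \succsim Y$ in $\succsim$. Hence the defining property \eqref{eq:QC} holds on every exchangeable pair.

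Having verified diversification on exchangeable pairs, I would then invoke the implication (iii)$\Rightarrow$(i) of Theorem \ref{th:1ex} to conclude that $\succsim$ is strongly risk averse: $X \ge_{\mathrm{cv}} Y$ implies $\rho(X) \le \rho(Y)$. Since $X \ge_{\mathrm{cv}} Y$ is the same as $Y \ge_{\mathrm{cx}} X$, this is precisely the statement that $\rho$ is increasing in the convex order, finishing the proof.

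I do not anticipate a real obstacle here, since all of the mathematical content is already supplied by Theorem \ref{th:1ex}; the corollary is essentially a restatement of the theorem's most difficult direction in risk-measure language, exploiting that exchangeable pairs are automatically ID so that \eqref{eq:rho-convex} is the natural one-sided quasi-convexity hypothesis. The one small point to get right is the orientation of the preference: with $\rho$ playing the role of a ``loss'' functional (smaller preferred), lower semicontinuity of $\rho$ gives exactly the upper semicontinuity of $\succsim$ required by Theorem \ref{th:1ex}.
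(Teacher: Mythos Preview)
Your proposal is correct and is exactly the approach the paper intends: the corollary is stated without a separate proof precisely because it is the immediate translation of the direction (iii)$\Rightarrow$(i) in Theorem~\ref{th:1ex} into risk-measure language via the preference $X\succsim Y\iff\rho(X)\le\rho(Y)$. Your handling of the orientation (lower semicontinuity of $\rho$ giving upper semicontinuity of $\succsim$) and the observation that exchangeable pairs automatically satisfy $X\simeq Y$ are the only small points to check, and you address both correctly.
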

\begin{remark}
A  simple sufficient condition for  $\rho:L^\infty \to \R$ to satisfy both law invariance and \eqref{eq:rho-convex} is
  $
\rho(\lambda X+(1-\lambda)Y) \le \rho(X) \mbox{ for all $X,Y\in L^\infty$ with $X\laweq  Y$ and $\lambda \in [0,1]$}.
$
 
\end{remark}

For the EU model, weak and strong notions of risk aversion coincide,  and hence Theorems \ref{th:1}--\ref{th:1ex} together imply that diversification for antimonotonic pairs is equivalent to the concavity of the utility function, stated in   \citet[Theorem 7]{PWW25}.    


\subsection{Independent pairs}
\label{sec:indi}

We now consider diversification on independent pairs, whose implications on risk aversion depend on the continuity assumptions, as we will see from the results in this section. 

\begin{proposition}\label{prop:2}
For a risk preference, diversification on independent pairs 
does not imply weak risk aversion. Indeed, the risk preference $\succsim$ represented by $\mathcal U$ via \eqref{eq:utility} with 
$$\mathcal U(X)=\esssup X,~~~X\in L^\infty$$ 
exhibits diversification on independent pairs  and   strong risk seeking.
\end{proposition}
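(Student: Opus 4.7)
The plan is to verify four properties of $\mathcal{U}(X) = \esssup X$: law invariance and upper semicontinuity (so that it defines a risk preference), strong risk seeking, diversification on independent pairs, and failure of weak risk aversion. Law invariance is immediate since $\esssup X$ depends only on the distribution of $X$, and $|\esssup X - \esssup Y| \le \|X - Y\|_\infty$ shows $\mathcal{U}$ is $L^\infty$-continuous, hence upper semicontinuous.

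For strong risk seeking, I would unpack $X \ge_{\rm cv} Y$ as $X \le_{\rm cx} Y$ and apply it to the convex function $z \mapsto (z-c)^+$ with $c = \esssup Y$: this yields $\E[(X-c)^+] \le \E[(Y-c)^+] = 0$, forcing $X \le c$ almost surely and hence $\esssup X \le \esssup Y$, which means $X \precsim Y$.

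For diversification on independent pairs, the key step is the identity
\[
\esssup(\lambda X + (1-\lambda) Y) = \lambda \esssup X + (1-\lambda) \esssup Y
\]
whenever $X$ and $Y$ are independent and $\lambda \in [0,1]$. The bound ``$\le$'' is immediate; for ``$\ge$'', independence makes the event $\{X > \esssup X - \epsilon\} \cap \{Y > \esssup Y - \epsilon\}$ a product of two positive-probability events, hence itself of positive probability, and on it $\lambda X + (1-\lambda)Y$ exceeds $\lambda \esssup X + (1-\lambda) \esssup Y - \epsilon$. Given the identity, any independent pair $X, Y$ with $X \simeq Y$ (equivalently $\esssup X = \esssup Y$) satisfies $\mathcal{U}(\lambda X + (1-\lambda)Y) = \esssup Y = \mathcal{U}(Y)$, so $\lambda X + (1-\lambda)Y \succsim Y$.

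Finally, weak risk aversion fails trivially: for any $X \in L^\infty$ that is not almost-surely constant one has $\mathcal{U}(\E[X]) = \E[X] < \esssup X = \mathcal{U}(X)$, so $\E[X] \not\succsim X$. The only place where anything nontrivial happens is the ``$\ge$'' direction of the $\esssup$ identity, which is precisely where independence (as opposed to, say, comonotonicity in Proposition \ref{prop:1}) is essential.
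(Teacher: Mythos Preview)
Your proof is correct and follows essentially the same approach as the paper's: both hinge on the identity $\esssup(\lambda X+(1-\lambda)Y)=\lambda\esssup X+(1-\lambda)\esssup Y$ for independent $X,Y$, and on the fact that $X\ge_{\rm cv}Y$ forces $\esssup X\le\esssup Y$. You simply fill in details the paper leaves implicit (the $(z-c)^+$ argument for the concave-order inequality, the $\epsilon$-argument for the ``$\ge$'' direction of the $\esssup$ identity, and the verification that $\mathcal U$ defines a risk preference).
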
 
  \begin{proof}  
It is clear that $\succsim$ exhibits strong risk seeking, because $X\ge_{\rm cv}Y$ implies $\esssup X\le \esssup Y$ and thus $X\precsim Y$.  
For $X,Y$ independent with $X\simeq Y$,  we have 
$$\esssup (\lambda X+(1-\lambda)Y) = \lambda \esssup  X+(1-\lambda)\esssup Y = \esssup Y,$$ and hence
$ \lambda X+(1-\lambda)Y\simeq Y$.
Therefore,  $\succsim$ exhibits diversification on independent pairs. 
  \end{proof}

\begin{remark}
    \label{rem:indi}
 Example \ref{ex:mean-variance} 
 illustrates that  strong risk aversion does not imply 
 diversification on independent  pairs,  noting that $(X,Y)$ in that example is independent. 
Together with Proposition \ref{prop:2}, we see that these two concepts are incomparable. 
\end{remark}

\begin{remark}
As we see in Proposition \ref{prop:1},  diversification on comonotonic pairs is compatible with strict strong risk seeking in \eqref{eq:ssrs}.
In contrast, diversification on independent pairs conflicts with strict strong risk seeking. 
To see this, take $X$ and $Y$ independent and both following a uniform distribution on $[0,1]$. 
Diversification on independent pairs would imply $X/2+Y/2 \succsim Y$, and strict strong risk seeking  would imply  $X/2+Y/2 \prec Y$, conflicting each other. 
That is why in Proposition \ref{prop:2} we can only state strong risk aversion but not the strict version.
\end{remark}

Our next result connects diversification on independent and ID pairs to strong risk aversion under compact  upper semicontinuity, which  is stronger than $L^\infty$-upper semicontinuity   and weaker than $L^p$-upper semicontinuity for any $p\in [1,\infty)$. 
\begin{theorem}
\label{th:2}
For a compact upper semicontinuous risk preference,   diversification  on independent and ID pairs implies  weak  risk aversion,
and it is implied by strong risk aversion.
Both implications are in general strict.
\end{theorem}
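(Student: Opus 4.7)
The plan is to prove the theorem in three parts.

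First, the easy direction, strong risk aversion $\Rightarrow$ diversification on independent and ID pairs, is immediate from Theorem~\ref{th:1ex}: strong risk aversion is equivalent to diversification on all ID pairs, and independent ID pairs form a subclass.

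Second, and this is the main content, I would establish diversification on independent and ID pairs $\Rightarrow$ weak risk aversion. The plan mirrors the iterative averaging in the proof of Theorem~\ref{th:1}, but replaces the antimonotonic coupling with an independent one. Given $X\in L^\infty$, set $X_0=X$ and iteratively take two independent copies $X_n^{(1)}\laweq X_n^{(2)}\laweq X_n$, defining
\[
X_{n+1}=\frac{X_n^{(1)}+X_n^{(2)}}{2}.
\]
Diversification on independent and ID pairs together with transitivity of $\succsim$ and law invariance yields $X_n\succsim X$ for every $n$. Distributionally, $X_n$ equals the arithmetic mean of $2^n$ i.i.d.\ copies of $X$, so the strong law of large numbers provides a realization with $X_n\to \E[X]$ almost surely, and $\|X_n\|_\infty\le \|X\|_\infty$ throughout. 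This is convergence in bounded convergence; compact upper semicontinuity of $\succsim$ applied to the chain $X_n\succsim X$ then delivers $\E[X]\succsim X$, i.e., weak risk aversion. I expect the main obstacle to be precisely this convergence step, and it is what forces the stronger continuity assumption: unlike in Theorem~\ref{th:1}, independent averaging does \emph{not} shrink the $L^\infty$-range of the iterates (the support is preserved), so $L^\infty$-upper semicontinuity alone is insufficient. Compact upper semicontinuity is the minimally sufficient strengthening because it accommodates the almost sure (equivalently, by law invariance, in-distribution) mode of convergence produced by the law of large numbers. Law invariance also absorbs the bookkeeping that each iteration formally requires a richer space to host a fresh independent copy.

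Third, both implications are strict in general, which I would justify by example. For ``weak risk aversion $\centernot\Rightarrow$ diversification on independent and ID'', Example~\ref{ex:weak-strange} suffices: the functional $\mathcal U(Z)=\E[Z]-\var(Z)|2-\var(Z)|$ is compact continuous (being a continuous function of $\E[Z]$ and $\var(Z)$ under bounded convergence) and weakly risk averse, yet for $X$ and an independent copy $Y\laweq X$ with $\var(X)$ close to $2$ one has $\var((X+Y)/2)=\var(X)/2$ falling strictly inside the penalty region, so $\mathcal U((X+Y)/2)<\mathcal U(X)$, violating diversification on this independent ID pair. For the converse ``diversification on independent and ID $\centernot\Rightarrow$ strong risk aversion'', I would construct a compactly upper semicontinuous preference that rewards variance reduction under independent averaging but fails some concave-order comparison arising from a non-independent coupling, for instance a mean--variance functional perturbed by a centered higher-moment term that tips the scale on an antimonotonic ID pair while still decreasing in variance under independent averaging.
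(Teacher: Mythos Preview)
Your proof of the two implications matches the paper's approach exactly: the easy direction via Theorem~\ref{th:1ex}, and the main direction via iterated independent halving so that $X_n$ is distributed as the average of $2^n$ i.i.d.\ copies of $X$, followed by the law of large numbers and compact upper semicontinuity. Your first counterexample (weak risk aversion $\centernot\Rightarrow$ diversification on independent ID pairs) is precisely what the paper does in Example~\ref{ex:weak-strange2}, reusing the functional of Example~\ref{ex:weak-strange} with an independent rather than antimonotonic copy; your computation $\var((X+Y)/2)=\var(X)/2$ is the same one.

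The one genuine gap is the second counterexample. Your plan---a mean--variance functional perturbed by a higher-moment term---is too vague to count as a proof: a pure mean--variance functional is already strongly risk averse, and a generic higher-moment perturbation can destroy diversification on independent ID pairs just as easily as it destroys strong risk aversion, so you would have to exhibit the perturbation and then verify both properties explicitly. The paper instead gives a clean closed-form construction (Example~\ref{ex:not-strong}): take $\mathcal V(X)=\E[e^{2X}]/\E[e^X]$ and set $X\succsim Y \iff \mathcal V(X)\le \mathcal V(Y)$. A short convexity argument yields $\mathcal V(\lambda X+(1-\lambda)Y)\le \mathcal V(X)^{\lambda}\mathcal V(Y)^{1-\lambda}$ for \emph{all} independent $X,Y$, so diversification on independent pairs (ID or not) holds; yet for a specific two-point $X$ and a three-point mean-preserving spread $Y$ one checks $\mathcal V(X)>\mathcal V(Y)$, so $X\prec Y$ despite $X\ge_{\rm cv}Y$, violating strong risk aversion. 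Without a concrete example of this kind, the strictness of the second implication is not established in your proposal.
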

  \begin{proof} 
The implication 
that strong risk aversion implies diversification on independent and ID pairs
follows from Theorem \ref{th:1ex}. We now show that diversification on independent and ID pairs implies weak risk aversion.
Let $X\in L^\infty$ and  $(X_n)_{n\in \N}$ be a sequence of independent   random variables with the same distribution  as $X$.
Write $S_n=\sum_{i=1}^{2^n}X_i$ for $n\in \N$. 
By the law of large numbers, we have that $ S_n /2^n\to \E[X]$ almost surely. Note that $S_n/2^n$ is uniformly bounded, so  $ S_n /2^n\to \E[X]$ in bounded convergence.
Diversification on independent and ID pairs implies $S_{n+1}\succsim S_n$ for $n\in \N$. Transitivity and compact upper semicontinuity of 
$\succsim $ give  
$\E[X]\succsim S_n \succsim
\dots \succsim S_1 \simeq X$.
Therefore, weak risk aversion holds.  
Examples demonstrating that the converses of the two implications fail are given in Examples \ref{ex:not-strong}  and   \ref{ex:weak-strange2}, respectively. 
  \end{proof}

\begin{example}[Diversification on IN $\centernot\implies$ strong risk aversion]
\label{ex:not-strong}

Define 
$
\mathcal V(X)=\E[e^{2X}]/\E[e^{X}] 
$ for $X\in L^\infty$,
  and let the risk preference $\succsim$ be given by $
X\succeq Y  \Longleftrightarrow  \mathcal V(X)\le \mathcal V(Y).$ 
It is straightforward to check that $\succsim$ satisfies compact continuity. 
It also satisfies diversification on independent pairs by noting that 
$\mathcal V(\lambda X+(1-\lambda)Y)\le \mathcal V(X)^{\lambda}  \mathcal V(Y)^{1-\lambda}$ for $X,Y$ independent and $\lambda \in [0,1]$; this follows from standard calculus.   
Therefore, if $X\simeq Y$ and $X,Y$ independent, then $\mathcal V (\lambda X+(1-\lambda)Y)\le \mathcal V(Y)$. 
Finally, $\succsim$ does not exhibit strong risk aversion, with the counterexample $(X,Y)$ specified by $\p(X=1)=\p(X=-1)=\p(Y=1)=1/2$  and $\p(Y=-3/2)=\p(Y=-1/2)=1/4$, which satisfies $X\ge_{\rm cv} Y$ and $X\prec Y$. 
 \end{example}

\begin{example}[Weak risk aversion $\centernot\implies$ diversification on IN and ID]
\label{ex:weak-strange2}
Consider the risk preference $\succsim$ exhibiting weak risk aversion given in Example \ref{ex:weak-strange},  represented by the utility functional $\mathcal U(Z)=  \E[Z] -\var(Z)|2-\var(Z)|$ for $Z\in L^\infty$. 
It is clear that $\succsim$  is compact continuous. 
Let the  distribution of $X$ be the same as in Example \ref{ex:weak-strange}, that is, $\p(X=3)=1/3$ and $\p(X=0)=2/3$,  $X$ and $Y$ be independent and ID, and $Z=(X+Y)/2$.  
 We can compute $\E[X]= 1$,  $\var(X)=2$, $\E[Z]=1$, and $\var(Z)= 1$.
 Therefore, 
 $
\mathcal U(X)
 =1  > 0=  \mathcal U(Z), 
 $
 violating diversification on independent and ID pairs.  
 
\end{example}

\subsection{Strict single-directional implications in Figure \ref{fig:1}}
We now justify that the  single-direction implications
in
Figure \ref{fig:1} are strict in general, using the abbreviations therein.
 We have shown that diversification on AM pairs is incomparable to strong risk aversion (Remark \ref{rem:anti} and Example \ref{ex:mean-variance}),
and so is diversification on IN pairs (Remark \ref{rem:indi}). These observations and Theorem \ref{th:1ex} imply that the three notions in the second row of Figure \ref{fig:1} are incomparable, and hence diversification on all pairs is strictly stronger than each of them.  
The strictness of the implication from 
diversification on ID  pairs to  diversification on AM (resp.~IN)  and ID pairs follows from Theorems \ref{th:1} and \ref{th:1ex} (resp.~Theorems \ref{th:1} and \ref{th:2}).
The strictness of the implication from diversification on AM  (resp.~IN) and ID pairs to  weak risk aversion is given in
Theorem \ref{th:1}  (resp.~Theorem \ref{th:2}). 
The strictness of the implication from 
diversification on AM (resp.~IN) pairs  to diversification on AM (resp.~IN) and ID pairs is justified by the fact that the former is incomparable to strong risk aversion and the latter is implied by strong risk aversion. 
The strict implication from strong to weak risk aversion is well known.



\section{Neutrality}

The opposite side of risk aversion is risk seeking, and a combination of both is risk neutrality. 
Similarly, we can define the opposite of diversification preferences, and the corresponding neutrality.
\begin{definition} 
For  $\mathcal X \subseteq (L^\infty)^2$,   a risk preference $ \succsim$ exhibits 
 \emph{anti-diversification on $\X$} if  
\begin{equation}\label{eq:anti-QC}
    X \simeq Y \implies 
    X \succsim \lambda X + (1-\lambda)Y   \mbox{ for all $\lambda \in [0,1]$},
\end{equation}
 and for all $(X,Y)\in \mathcal X$.
A risk preference  exhibits \emph{diversification neutrality} if both diversification and 
 anti-diversification hold.

  \end{definition}
  Anti-diversification on different classes  describes situations in which the decision maker does not wish to diversify. 
  By applying our results to the reverse relation of $\succsim$, we can see that all results hold when we replace ``risk aversion" with ``risk seeking"
  and ``diversification" with ``anti-diversification". 
    Moreover,
 combining our main results in the previous section, we obtain the following equivalence between various forms of neutrality. We will involve an additional assumption of \emph{monotonicity (on constants)}: 
 $$
x \ge y  ~\implies~ x \succsim y \qquad \mbox{ for all $x,y\in \R$.}
 $$ 

  \begin{theorem}\label{th:3}
For a continuous risk preference $\succsim$, 
the following are equivalent:
\begin{enumerate}[label=(\roman*)]
    \item risk neutrality;   
    \item diversification neutrality on ID pairs; 
      \item diversification neutrality on exchangeable pairs; 
        \item diversification neutrality on antimonotonic and ID pairs.
    \end{enumerate}
If $\succsim$ is monotone, then each of the above is equivalent to 
\begin{enumerate}[label=(\roman*), resume]  
    \item diversification neutrality on all pairs;    
    \item diversification neutrality on antimonotonic pairs.  
\end{enumerate}
If $\succsim$  is monotone and compact continuous, then each of the above is equivalent to 
\begin{enumerate}[label=(\roman*), resume] 
    \item diversification neutrality on independent pairs;   
    \item  diversification neutrality on independent and ID  pairs.  
\end{enumerate}
      
  \end{theorem}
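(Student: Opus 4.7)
The plan is to prove the equivalences via forward chains, exploiting the key observation that diversification neutrality on a class $\X$ is diversification \emph{plus} anti-diversification on $\X$, and that anti-diversification for $\succsim$ is precisely diversification for the reverse preference $\succsim^{-1}$. Since continuity of $\succsim$ passes upper semicontinuity to $\succsim^{-1}$, Theorems \ref{th:1}, \ref{th:1ex}, and \ref{th:2} can be applied twice---once to $\succsim$ and once to $\succsim^{-1}$---to extract matching risk-aversion and risk-seeking conclusions whose combination is risk neutrality.

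For the core block (i)--(iv), the plan is (i)$\Rightarrow$(ii)$\Rightarrow$(iii)$\Rightarrow$(iv)$\Rightarrow$(i). The implication (i)$\Rightarrow$(ii) uses only that ID pairs satisfy $\E[X]=\E[Y]$, so risk neutrality yields $\lambda X + (1-\lambda)Y \simeq \E[X] \simeq Y$ and both parts of \eqref{eq:QC} and \eqref{eq:anti-QC} hold by transitivity. The implications (ii)$\Rightarrow$(iii)$\Rightarrow$(iv) are pure class inclusions: exchangeable pairs are ID, and every antimonotonic ID pair admits the representation $(Q_X(U),Q_X(1-U))$ and is therefore exchangeable since $U \laweq 1-U$. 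The decisive step (iv)$\Rightarrow$(i) applies Theorem \ref{th:1} to $\succsim$ to obtain $\E[X]\succsim X$, and Theorem \ref{th:1} to $\succsim^{-1}$ to obtain $X \succsim \E[X]$; jointly these give $\E[X]\simeq X$.

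Under the monotonicity hypothesis, I would close the chain (i)$\Rightarrow$(v)$\Rightarrow$(vi)$\Rightarrow$(iv) with the last two arrows being class inclusions. For (i)$\Rightarrow$(v) on a general pair $X\simeq Y$, risk neutrality gives $\E[X]\simeq\E[Y]$ as constants; taking without loss of generality $\E[X]\le\E[Y]$, monotonicity on constants forces $\E[Y]\succsim c\succsim\E[X]$ for every $c$ in between, so transitivity with $\E[X]\simeq\E[Y]$ yields $c\simeq\E[X]$. Taking $c=\lambda\E[X]+(1-\lambda)\E[Y]$ and using risk neutrality $\lambda X+(1-\lambda)Y\simeq c$ finishes. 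Under the further assumption of compact continuity, (vii)$\Rightarrow$(viii) is a class inclusion, (i)$\Rightarrow$(vii) is immediate from (i)$\Rightarrow$(v), and (viii)$\Rightarrow$(i) is the same double-application trick but now invoking Theorem \ref{th:2}, whose compact upper semicontinuity requirement is inherited by both $\succsim$ and $\succsim^{-1}$ from compact continuity of $\succsim$.

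The main obstacle is the symmetric deployment of Theorems \ref{th:1} and \ref{th:2} on $\succsim^{-1}$, which is the engine of both (iv)$\Rightarrow$(i) and (viii)$\Rightarrow$(i); verifying that (compact) continuity of $\succsim$ transfers (compact) upper semicontinuity to $\succsim^{-1}$ is the small technical fact that makes the whole scheme go through. A secondary but necessary check is that antimonotonic ID pairs are exchangeable, which seals the equivalence of (iii) and (iv) via pure class inclusion rather than requiring a separate argument.
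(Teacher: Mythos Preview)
Your proposal is correct and follows essentially the same approach as the paper's proof: the same chains of implications, the same class-inclusion steps, and the same decisive use of Theorems \ref{th:1} and \ref{th:2} applied to both $\succsim$ and its reverse to extract risk neutrality from diversification neutrality. Your explicit verification that antimonotonic ID pairs are exchangeable (via the $(Q_X(U),Q_X(1-U))$ representation) and your remark that continuity of $\succsim$ passes upper semicontinuity to $\succsim^{-1}$ are useful clarifications that the paper leaves implicit.
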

   \begin{proof}  \underline{(i)$\Rightarrow$(ii)}: Risk neutrality implies $\E[X]\simeq X$ for all $X\in L^\infty$. For $X\laweq Y$ and $\lambda \in[0,1]$, 
     we have
     $\lambda X+ (1-\lambda ) Y \simeq \E[ \lambda X+ (1-\lambda ) Y ]  
     = 
     \E[X] \simeq X,
     $
     and thus diversification neutrality on ID pairs holds. \underline{(ii)$\Rightarrow$(iii)$\Rightarrow$(iv)}: These follow by definition.  \underline{(iv)$\Rightarrow$(i)}: This follows by applying Theorem \ref{th:1} to both $\succsim$ and $\precsim$, and noting   that weak risk aversion  and weak risk seeking together imply risk neutrality.

         Next, assume   monotonicity. 
 \underline{(i)$\Rightarrow$(v)}: For $X\simeq Y$ and $\lambda \in[0,1]$ with $\E[X]\le \E[Y]$
     we have
     $$X\simeq \E[X] \le \lambda X+ (1-\lambda ) Y \simeq \E[ \lambda X+ (1-\lambda ) Y ]  
     \le 
     \E[Y] \simeq Y\simeq X,
     $$
     and by transitivity of $\succsim$ diversification neutrality on all pairs holds. 
\underline{(v)$\Rightarrow$(vi)$\Rightarrow$(iv)}: These  follow by definition.

         Finally, assume   monotonicity and compact continuity.  
\underline{(v)$\Rightarrow$(vii)$\Rightarrow$(viii)}: These follow by definition. 
 \underline{(viii)$\Rightarrow$(i):}  This follows by applying Theorem \ref{th:2} to both $\succsim$ and $\precsim$  and,  again, noting   that weak risk aversion  and weak risk seeking together imply risk neutrality.
   \end{proof}

If we assume
 \emph{strict monotonicity} for the risk preference $\succsim$, that is, 
 $$
x >y  ~\implies~ x \succ y \qquad \mbox{ for all $x,y\in \R$,}
 $$ 
 then  statements (i)--(vi) in  Theorem \ref{th:3} 
 are all equivalent to a representation of $\succsim$ by the mean, that is,
    $ 
    X\succsim Y \iff \E[X]\ge \E[Y].
    $   
 The next example shows that monotonicity cannot be removed from the implications (i)$\Rightarrow$(v) 
 and 
 (i)$\Rightarrow$(vi)
 in Theorem \ref{th:3}. 
 
 \begin{example}
The   risk  preference  $\succsim $ given by 
 $
X\succsim Y \iff (\E[X])^2 \ge (\E[Y])^2
 $
 exhibits risk neutrality but it is not monotone. 
 It does not satisfy diversification for antimonotonic pairs because  for $X$ with $\E[X]\ne 0$, we have
 $
 X\simeq -X 
 $ and $0=(X-X)/2\prec X$.
 Therefore, (i) in Theorem \ref{th:3} holds but neither (v) nor (vi) does.
  \end{example}

The risk preference represented by the essential supremum  in Proposition \ref{prop:2}  satisfies diversification neutrality on independent pairs. This  shows that the compact continuity assumed for the implication (viii)$\Rightarrow$(i)  
cannot be dispensed with.

\section{Extension to unbounded random variables}
\label{sec:L1}

In many financial applications concerning diversification, 
the payoffs of assets are not necessarily bounded; see the textbook \cite{MFE15} for discussions on the empirical evidence.
The natural domain to define the two forms of risk aversion is $L^1$, as both notions require integrability of the random payoffs to compare.

All our main results can be naturally extended to law-invariant preference relations $\succsim$ on $L^p$ for $p\in [1,\infty)$ with  similar proof techniques, but the $L^\infty$- and compact upper semicontinuity of $\succsim$ need to be strengthened to $L^p$-upper semicontinuity to accommodate convergence in the larger space. In this section, we show that the results in Theorems \ref{th:1}--\ref{th:3} hold in the $L^p$ setting under $L^p$-upper semicontinuity of $\succsim$, following  similar proof arguments with some manipulations.

For Theorem \ref{th:1}  in the $L^p$ setting,  we use the same construction of $(X_n)_{n\in \N}$ as in the proof for the case of $L^\infty$, and instead of $X_n \to \E[X]$ in $L^\infty$ we need to show $X_n\to \E[X]$ in $L^p$. This is guaranteed by Theorem \ref{th:lln} below. To prove Theorem \ref{th:lln}, we first present a standard result on the concave order and negative dependence.
We say that a pair $(X_1,X_2)$ of random variables is \emph{negatively quadrant dependent} (NQD, \citealp{L66})  if
$$
\p(X_1\le x_1, X_2\le x_2)\le \p(X_1\le x_1)\p(X_2\le x_2) \mbox{  for all $x_1,x_2\in \R$}. 
$$
Clearly, both independence and antimonotonicity belong to NQD, and indeed they  have the largest and smallest $\p(X_1\le x_1, X_2\le x_2)$ satisfying the above inequality.  
\begin{lemma}\label{lem:anti}
    For random variables $X_1,X_2,Y_1,Y_2\in L^1$ satisfying  $(X_1,X_2)$ NQD, 
      $(Y_1,Y_2)$ independent,   $X_1\ge_{\rm cv} Y_1$ 
       and $X_2\ge_{\rm cv} Y_2$, we have 
       $
       X_1+X_2\ge_{\rm cv} Y_1+Y_2.
       $
\end{lemma}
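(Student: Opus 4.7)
The strategy is to bridge through an independent coupling of $(X_1, X_2)$. Since $(\Omega, \mF, \p)$ is atomless, I can select $(X_1^\perp, X_2^\perp)$, a pair of \emph{independent} random variables with $X_1^\perp \laweq X_1$ and $X_2^\perp \laweq X_2$. The plan is to establish the sandwich
$$
X_1 + X_2 \ \ge_{\rm cv}\ X_1^\perp + X_2^\perp \ \ge_{\rm cv}\ Y_1 + Y_2
$$
and then conclude by transitivity of the concave order.

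For the first inequality, I would exploit the fact that for bivariate distributions with fixed marginals, NQD is equivalent to being dominated, pointwise in the joint CDF, by the independent coupling. This places $(X_1, X_2)$ below $(X_1^\perp, X_2^\perp)$ in the supermodular order (see, e.g., \citealp{SS07}, Chapter 9). Given any concave $u:\R\to\R$, the function $\phi(x_1, x_2) = u(x_1+x_2)$ is submodular (its mixed second partial, when it exists, equals $u''\le 0$), so $-\phi$ is supermodular. The supermodular order thus yields $\E[-\phi(X_1, X_2)] \le \E[-\phi(X_1^\perp, X_2^\perp)]$, i.e., $\E[u(X_1+X_2)] \ge \E[u(X_1^\perp+X_2^\perp)]$. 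Since this holds for every concave $u$, we get $X_1 + X_2 \ge_{\rm cv} X_1^\perp + X_2^\perp$.

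For the second inequality, I would use the standard closure property of the concave order under sums of independent random variables: $X_1^\perp\ge_{\rm cv} Y_1$, $X_2^\perp\ge_{\rm cv} Y_2$, with both pairs independent, imply $X_1^\perp+X_2^\perp\ge_{\rm cv} Y_1+Y_2$. A clean way to see this is to condition: for any concave $u$,
$$
\E[u(X_1^\perp + X_2^\perp)] = \E\!\left[\E[u(X_1^\perp + X_2^\perp)\mid X_2^\perp]\right] \ge \E\!\left[\E[u(Y_1 + X_2^\perp)\mid X_2^\perp]\right],
$$
where the inequality uses that $x_2\mapsto u(\cdot + x_2)$ is concave for each fixed $x_2$ and that $X_1^\perp\ge_{\rm cv} Y_1$, together with the independence of $X_1^\perp$ from $X_2^\perp$ and of $Y_1$ from $X_2^\perp$ (which can be arranged by taking independent copies). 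A second conditioning on $Y_1$ and the analogous argument with $X_2^\perp\ge_{\rm cv} Y_2$ yields $\E[u(Y_1+X_2^\perp)]\ge \E[u(Y_1+Y_2)]$, completing this step.

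The main obstacle is the first step: supermodular order arguments are typically stated for functions with enough regularity (or boundedness) to justify differentiation or integration-by-parts identities such as Hoeffding's covariance formula, whereas here $u$ is only concave and the variables only in $L^1$. I would handle this with a standard truncation-and-approximation: reduce first to bounded concave $u$ and compactly supported $X_i, Y_i$ by dominated/monotone convergence, then to smooth $u$ (e.g., by mollification), apply the supermodular comparison in that regular setting, and pass to the limit. Alternatively, one can cite the fact directly from \citet[Chapter 9]{SS07} (or \citealp{MS02}), which states the supermodular closure for $L^1$ vectors and submodular test functions.
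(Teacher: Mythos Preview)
Your proof is correct and follows essentially the same route as the paper: bridge through an independent coupling $(X_1^\perp,X_2^\perp)$ of the marginals of $(X_1,X_2)$, obtain $X_1+X_2\ge_{\rm cv}X_1^\perp+X_2^\perp$ from the CDF ordering implied by NQD, and then $X_1^\perp+X_2^\perp\ge_{\rm cv}Y_1+Y_2$ from closure of the concave order under convolution. The only difference is presentational: where you unpack the supermodular/submodular mechanism and the conditioning argument, the paper simply cites \citet[Theorem 3.8.2]{MS02} for the first step and \citet[Theorem 3.A.12]{SS07} for the second.
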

  \begin{proof} 
Take $X_1'\laweq X_1$ and  $X_2'\laweq X_2$ such that $(X_1',X_2')$ is independent. We  have 
       $$
       X_1+X_2\ge_{\rm cv} X_1'+X_2'\ge_{\rm cv} Y_1+Y_2,
       $$    
       where the first inequality follows from  the fact that for given marginal distributions, ordering in the bivariate distribution function implies the convex order of the sum \citep[Theorem 3.8.2]{MS02}, and the second inequality follows from the closure property of the concave order under convolution \citep[Theorem 3.A.12]{SS07}.  
  \end{proof}
The next result gives an $L^p$-law of large numbers for  negatively dependent sequences of ID random variables, which may be of some interest in probability theory.
\begin{theorem}\label{th:lln}
For  $X\in L^p$, let 
      $(X_n)_{n\in \N}$  be a  sequence satisfying  $X_0=X$ and for $n\in \N$,
    \begin{align*} X_n&=\frac{X_{n-1}^{(1)}+X_{n-1}^{(2)}}{2}, \mbox{ where $X_{n-1}^{(1)}\laweq X_{n-1}^{(2)}\laweq X_{n-1}$ and $(X_{n-1}^{(1)},X_{n-1}^{(2)})$ is NQD}. \end{align*}
    Then $X_n\to \E[X]$  in $L^p$.
\end{theorem}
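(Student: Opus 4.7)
My plan is to dominate the NQD averaging process $(X_n)$ by a genuine i.i.d.\ averaging in the concave order, and then transfer the classical $L^p$-law of large numbers for i.i.d.\ sequences to $(X_n)$. As the auxiliary comparison, let $(Y_i)_{i\in\N}$ be i.i.d.\ with $Y_1 \laweq X$ and set $Z_n = 2^{-n}\sum_{i=1}^{2^n} Y_i$. For $X\in L^p$ with $p\ge 1$, the classical $L^p$-strong law of large numbers gives $Z_n\to \E[X]$ in $L^p$.

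The heart of the proof is the induction $X_n \ge_{\rm cv} Z_n$ for every $n\in\N$. The base case $n=0$ is just $X_0 = X \laweq Y_1 = Z_0$. For the inductive step, the construction gives $(X_{n-1}^{(1)},X_{n-1}^{(2)})$ NQD with both marginals $\laweq X_{n-1}$. On an auxiliary atomless space I would form independent copies $(Z_{n-1}^{(1)},Z_{n-1}^{(2)})$ with marginals $\laweq Z_{n-1}$. The inductive hypothesis and the law-invariance of the concave order give $X_{n-1}^{(i)} \ge_{\rm cv} Z_{n-1}^{(i)}$ for $i=1,2$. Applying Lemma \ref{lem:anti} yields
$$X_{n-1}^{(1)} + X_{n-1}^{(2)} \ge_{\rm cv} Z_{n-1}^{(1)} + Z_{n-1}^{(2)},$$
and dividing both sides by $2$ (which preserves the concave order under positive affine transformations) gives $X_n \ge_{\rm cv} Z_n$.

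To close the argument, observe that $X_n \ge_{\rm cv} Z_n$ is equivalent to $X_n \le_{\rm cx} Z_n$, and both distributions share the mean $\E[X]$. Since $\phi(x) = |x - \E[X]|^p$ is convex on $\R$ for every $p\ge 1$, the defining property of the convex order gives
$$\E\bigl[|X_n - \E[X]|^p\bigr] \le \E\bigl[|Z_n - \E[X]|^p\bigr],$$
and the right-hand side vanishes as $n\to\infty$ by the $L^p$-SLLN established in the first step. This is exactly $X_n \to \E[X]$ in $L^p$.

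The main technical ingredient is the $L^p$-SLLN for the i.i.d.\ comparison sequence, whose justification for $X\in L^p$ beyond the almost-sure statement passes through the uniform integrability of $(|Z_n - \E[X]|^p)_{n\in\N}$; this is precisely the step where a uniform integrability result of the type in \cite{LV13} enters when the paper generalizes its $L^\infty$-arguments to $L^p$. The remaining moves are a clean combination of the dependence-reduction via Lemma \ref{lem:anti} and the standard reversal between the concave and convex orders applied to the centered random variables.
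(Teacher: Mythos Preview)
Your proposal is correct and follows essentially the same route as the paper: introduce the i.i.d.\ averages $Z_n$ (the paper's $S_n$), prove $X_n\ge_{\rm cv}Z_n$ by induction using Lemma~\ref{lem:anti}, establish $Z_n\to\E[X]$ in $L^p$ via the almost-sure law of large numbers together with uniform integrability of $(|Z_n|^p)$ from \cite{LV13}, and transfer to $X_n$ through the convex function $x\mapsto|x-\E[X]|^p$. The only cosmetic difference is that you spell out the inductive step (constructing the independent pair $(Z_{n-1}^{(1)},Z_{n-1}^{(2)})$) more explicitly than the paper does.
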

\begin{remark}
We comment on a few special cases of Theorem \ref{th:lln}. The case  with independent $(X_{n-1}^{(1)},X_{n-1}^{(2)})$ is a version of the $L^p$-law of large numbers for independent and ID sequences   in $L^p$.  
    The construction of $(X_n)_{n\in \N}$ with antimonotonic  $(X_{n-1}^{(1)},X_{n-1}^{(2)})$ appears in the proof of Theorem \ref{th:1}.
We note that $(X_n)_{n\in \N}$ is only specified in terms of its marginal distributions, and hence we cannot expect $X_n\to \E[X]$ almost surely.
\end{remark}
  \begin{proof} [Proof of Theorem \ref{th:lln}]
We will compare $(X_n)_{n\in \N}$ with another sequence $(S_n)_{n\in \N}$ given by $S_n=\sum_{i=1}^{2^n}Y_i/2^{n}$ for $n\in \N$, where $(Y_n)_{n\in\N}$ is an independent and ID sequence with the same distribution as $X$.  
Because $X_0\laweq S_0$, we can apply Lemma \ref{lem:anti} to get $X_1\ge_{\rm cv}S_1$.  By induction on $n\in \N$ and using Lemma \ref{lem:anti} repeatedly, we get $X_n\ge_{\rm cv} S_n$ for all $n\in \N$. 
Next, let us check that $|S_n|^p$ is uniformly integrable. Note that  since $S_n\le_{\rm cx}X$ 
where $\le_{\rm cx}$ is the convex order, we have that 
$(S_n)_+^p \le_{\rm icx} X_+^p$,
where  $\le_{\rm icx}$ is the increasing convex order and $(x)_+=\max\{x,0\}$; see e.g., \citet[Theorem 4.A.15]{SS07}.
This implies that $((S_n)_+^p)_{n\in \N}$ is uniformly integrable by using \citet[Theorem 1]{LV13}.
By a symmetric argument, $((-S_n)_+^p)_{n\in \N}$ is also uniformly integrable. This shows $(|S_n|^p)_{n\in \N}$ is  uniformly integrable.  
By the strong law of large numbers, $S_n\to \E[X]$ almost surely. Using the uniform integrability of $(|S_n|^p)_{n\in \N}$ and $S_n\to \E[X]$, we get  $\E[|S_n-\E[X]|^p]\to 0$ by  \citet[Theorem 4.5.4]{C01}.
Since $x\mapsto |x-\E[X]|^p$ is convex, we have  $\E[|X_n-\E[X]|^p]\le  \E[|S_n-\E[X]|^p]\to 0$.
  \end{proof} 

 Theorem \ref{th:1} in the $L^p$ setting follows by using   Theorem \ref{th:lln} with antimonotonicity and the same proof arguments for the case of  $L^\infty$. 
  Theorem \ref{th:1ex} in the $L^p$ setting  follows from the a similar  argument, by using  Theorem \ref{th:lln} on the conditional distributions and replacing the $L^\infty$-approximation  in Lemma \ref{lem:1} with an $L^p$-approximation. We omit the details here. The proof of 
Theorem \ref{th:2} in the $L^p$  setting follows by applying Theorem \ref{th:lln} with independence
and and the same proof arguments for the case of $L^\infty$. The proof of Theorem \ref{th:3} in the $L^p$
 setting carries over verbatim.  

\section{Conclusion}

The results  in this paper show that one can recover rich information about risk attitudes from relatively modest diversification principles, provided they are formulated on economically meaningful classes of pairs such as antimonotonic, exchangeable, and independent risks. 
The main obtained  relations  are summarized in Figure~\ref{fig:1}. 
Especially, if a decision maker prefers to combine antimonotonic risks, as in hedging or purchasing insurance, then weak risk aversion can be deduced; if they prefer to combine exchangeable risks, as in pooling similar assets, then strong risk aversion can be deduced.
Our counterexamples highlight the limits of diversification as a diagnostic for risk aversion, and they underscore the role played by law invariance, continuity, and completeness assumptions in existing axiomatic frameworks.


\begin{thebibliography}{10}
 

\bibitem[\protect\citeauthoryear{Aouani, Chateauneuf and Ventura}{Aouani et al.}{2021}]{ACV21} Aouani, Z., Chateauneuf, A., and Ventura, C. (2021). Propensity for hedging and ambiguity aversion. \emph{Journal of Mathematical Economics}, \textbf{97}, 102543.

\bibitem[\protect\citeauthoryear{Arrow}{1963}]{A63}
Arrow, K. J.  (1963). Uncertainty and the welfare economics of medical care. \emph{American Economic Review}, \textbf{53}(5), 941--973.


\bibitem[\protect\citeauthoryear{Brown et al.}{2012}]{BGS2012}
Brown, D., De Giorgi, E., and Sim, M. (2012). Aspirational preferences and their representation by risk measures.
 \emph{Management Science},  {\bf 58}(11), 2095--2113.




\bibitem[\protect\citeauthoryear{Cerreia-Vioglio et al.}{2011}]{CMMM11}
Cerreia-Vioglio, S.,  Maccheroni, F., Marinacci, M. and Montrucchio, L. (2011). Risk measures:
Rationality and diversification. \emph{Mathematical Finance},  {\bf 21}(4), 743--774.


  \bibitem[\protect\citeauthoryear{Chateauneuf and Lakhnati}{2007}]{CL07}
Chateauneuf, A. and Lakhnati, G. (2007). From sure to strong diversification. \emph{Economic Theory}, \textbf{32}(3), 511--522.

\bibitem[\protect\citeauthoryear{Chateauneuf and Tallon}{2002}]{CT02}
Chateauneuf, A. and Tallon, J. M. (2002). Diversification, convex preferences and non-empty core in the Choquet expected utility model. \emph{Economic Theory}, \textbf{19}(3), 509--523.


\bibitem[\protect\citeauthoryear{Chew and Mao}{1995}]{CM95}
Chew, S. and Mao, M. (1995). A Schur concave characterization of risk aversion for non-expected utility preferences. \emph{Journal of Economic Theory}, \textbf{67}(2), 402--435.


\bibitem[\protect\citeauthoryear{Chung}{Chung}{2001}]{C01}
Chung, K. L. (2001). \emph{ A Course in Probability Theory}. Third Edition. Academic Press, London, UK.


\bibitem[\protect\citeauthoryear{Cohen}{Cohen}{1995}]{C95} 
Cohen, M. D. (1995). Risk-aversion concepts in expected- and non-expected-utility models. \emph{Geneva Papers on Risk and Insurance Theory}, \textbf{20}(1), 73--91.

\bibitem[\protect\citeauthoryear{Dekel}{1989}]{D89}
Dekel, E. (1989). Asset demand without the independence axiom. \emph{Econometrica}, \textbf{57}, 163--169.


\bibitem[\protect\citeauthoryear{Delbaen}{2012}]{D12}
Delbaen, F. (2012).  \emph{Monetary Utility Functions}. Osaka University Press, Osaka.

\bibitem[\protect\citeauthoryear{Dhaene et al.}{Dhaene et al.}{2002}]{DDGKV02}
{Dhaene, J., Denuit, M., Goovaerts, M.~J., Kaas, R. and Vynche, D.} (2002). {The concept of comonotonicity in actuarial science and finance: Theory}. {\em Insurance: Mathematics and Economics},
\textbf{31}(1), 3--33.


 \bibitem[\protect\citeauthoryear{Drapeau and Kupper}{Drapeau and Kupper}{2013}]{DK13}
Drapeau, S. and Kupper, M. (2013). Risk preferences and their robust representation. \emph{Mathematics of Operations Research}, \textbf{38}(1), 28--62.







 \bibitem[\protect\citeauthoryear{F\"ollmer and Schied}{F\"ollmer and Schied}{2016}]{FS16} F\"ollmer, H.~and Schied, A.~(2016). \emph{Stochastic Finance. An Introduction in Discrete Time}. Fourth Edition.  {Walter de Gruyter, Berlin}.
 


 



\bibitem[\protect\citeauthoryear{Han et al.}{2024}]{HWWW24}
Han, X., Wang, B., Wang, R. and Wu, Q. (2024). Risk concentration and the mean-Expected Shortfall criterion. \emph{Mathematical Finance}, \textbf{34}(3), 819--846. 

\bibitem[\protect\citeauthoryear{Joe}{Joe}{1997}]{J97}
Joe, H. (1997).
  {\em Multivariate Models and Dependence Concepts}.
  Chapman \& Hall, London. 

\bibitem[\protect\citeauthoryear{Jouini et al.}{2006}]{JST06}
Jouini, E., Schachermayer, W. and Touzi, N. (2006). Law invariant risk measures have the Fatou property. \emph{Advances in Mathematical Economics}, \textbf{9}, 49--71.


\bibitem[\protect\citeauthoryear{Lauzier et al.}{Lauzier et al.}{2026}]{LLW25}
Lauzier, J.-G., Lin, L. and Wang, R. (2026). Risk sharing, measuring variability, and distortion riskmetrics. \textit{Mathematical Finance}, \textbf{36}(2), 330--351.



\bibitem[\protect\citeauthoryear{Leitner}{Leitner}{2005}]{L05}
Leitner, J. (2005). A short note on second-order stochastic dominance preserving coherent risk
measures. \emph{Mathematical Finance} \textbf{15}(4), 649--651.
			
			
\bibitem[\protect\citeauthoryear{Leskel\"a  and Vihola}{Leskel\"a and Vihola}{2013}]{LV13}
Leskel\"a, L. and Vihola, M. (2013). Stochastic order characterization of uniform integrability and tightness. \emph{Statistics and Probability Letters}, \textbf{83}(1), 382--389.


\bibitem[\protect\citeauthoryear{Lehmann}{Lehmann}{1966}]{L66}
Lehmann, E. L. (1966). Some concepts of dependence. \emph{Annals of Mathematical Statistics},  \textbf{37}(5), 1137--1153.



\bibitem[\protect\citeauthoryear{Maccheroni et al.}{2025}]{MMWW25}
Maccheroni, F., Marinacci, M., Wang, R. and Wu, Q. (2025). Risk aversion and insurance propensity.  \emph{American Economic Review},  \textbf{115}(5), 1597--1649. 


\bibitem[\protect\citeauthoryear{Mao and Wang}{2020}]{MW20}
{Mao, T. and Wang, R.} (2020).  Risk aversion in regulatory capital calculation. \emph{SIAM Journal on Financial Mathematics}, \textbf{11}(1), 169--200.



\bibitem[\protect\citeauthoryear{Markowitz}{Markowitz}{1952}]{M52}
Markowitz, H. (1952). Portfolio selection. \emph{Journal of Finance}, \textbf{7}(1), 77--91.

\bibitem[\protect\citeauthoryear{McNeil et al.}{McNeil et al.}{2015}]{MFE15}
{McNeil, A. J., Frey, R. and Embrechts, P.} (2015). \emph{Quantitative
Risk Management: Concepts, Techniques and Tools}. Revised Edition.  Princeton, NJ:
Princeton University Press.


\bibitem[\protect\citeauthoryear{Mu et al.}{2024}]{MPST24}
Mu, X., Pomatto, L., Strack, P. and Tamuz, O. (2024). Monotone additive statistics. \emph{Econometrica},  \textbf{92}(4), 995--1031. 
  

\bibitem[\protect\citeauthoryear{M\"uller and Stoyan}{M\"uller and Stoyan}{2002}]{MS02} {M\"uller, A. and Stoyan, D.} (2002). \emph{Comparison Methods for Stochastic Models and Risks}. Wiley, England.



\bibitem[\protect\citeauthoryear{Pomatto et al.}{2020}]{PST20}
Pomatto, L., Strack, P. and Tamuz, O. (2020). Stochastic dominance under independent noise. \emph{Journal of Political Economy}, \textbf{128}(5), 1877--1900.
  

\bibitem[\protect\citeauthoryear{Pratt}{Pratt}{1964}]{P64}
Pratt, J. W. (1964). Risk aversion in the small and in the large. \emph{Econometrica}, \textbf{32}, 122--136.


\bibitem[\protect\citeauthoryear{Principi et al.}{Principi et al.}{2025}]{PWW25}
Principi, G., Wakker, P. and Wang, R. (2025). Anticomonotonicity for preference axioms: The natural counterpart to comonotonicity. \emph{Theoretical Economics}, \textbf{20}(3), 831--855. 





\bibitem[\protect\citeauthoryear{Rostek}{Rostek}{2010}]{R10}
 Rostek, M. (2010).
Quantile maximization in
decision theory. \emph{Review of Economic Studies}, \textbf{77}, 339--371.



\bibitem[\protect\citeauthoryear{Rothschild and Stiglitz}{1970}]{RS70}
Rothschild, M. and Stiglitz, J. E. (1970). Increasing risk: I. A definition. \emph{Journal of Economic Theory}, \textbf{2}(3), 225--243.



  


\bibitem[\protect\citeauthoryear{Schmeidler}{Schmeidler}{1989}]{S89}
{Schmeidler, D.} (1989). Subjective probability and expected utility without additivity.
\emph{Econometrica}, \textbf{57}(3), 571--587.
 
\bibitem[\protect\citeauthoryear{Schmidt and Zank}{Schmidt and Zank}{2008}]{SZ08}
Schmidt, U. and Zank, H. (2008). Risk aversion in cumulative prospect theory. \emph{Management Science}, \textbf{54}, 208--216.

\bibitem[\protect\citeauthoryear{Shaked and Shanthikumar}{Shaked and Shanthikumar}{2007}]{SS07} Shaked, M. and Shanthikumar, J. G. (2007). \emph{Stochastic
Orders}. Springer Series in Statistics.

\bibitem[\protect\citeauthoryear{Strassen}{1965}]{S65} {Strassen, V.} (1965). The existence of probability measures with given marginals. \emph{Annals of Mathematical Statistics}, \textbf{36}(2), 423--439.


    





\bibitem[\protect\citeauthoryear{Yaari}{Yaari}{1987}]{Y87}
{Yaari, M. E.} (1987). The dual theory of choice under risk. \emph{Econometrica}, \textbf{55}(1), 95--115.


\end{thebibliography}
\end{document}